\definecolor{gray}{rgb}{.17,.21,.21}
\newcommand{\flow}{\varphi}
\newcommand{\iso}{{\rm Iso}}
\newtheorem{theorem}{Theorem}[section]
\newtheorem{proposition}[theorem]{Proposition}
\newtheorem{lemma}[theorem]{Lemma}
\newtheorem{corollary}[theorem]{Corollary}
\newtheorem{definition}[theorem]{Definition}
\newenvironment{proof}{\medskip \noindent {\em Proof:}}{\hfill $\square$ \\[2mm] \indent}
\numberwithin{equation}{section}
\newlength{\dinwidth}
\newlength{\dinmargin}
\newcommand{\Ibb}[1]{ {\rm I\ifmmode\mkern -3.6mu\else\kern -.2em\fi#1}}
\newcommand{\ibb}[1]{\leavevmode\hbox{\kern.3em\vrule
     height 1.2ex depth -.3ex width .2pt\kern-.3em\rm#1}}
\newcommand{\Cl}{{\ibb C}}
\newcommand{\Rl}{{\Ibb R}}
\definecolor{lightgray}{rgb}{0.8,0.8,0.8}
\newcommand{\Om}{\Omega}
\newcommand{\om}{\omega}
\newcommand{\la}{\lambda}
\newcommand{\eps}{\varepsilon}
\newcommand{\A}{\mathcal{A}}
\newcommand{\B}{\mathcal{B}}
\newcommand{\C}{\mathcal{C}}
\newcommand{\K}{\mathcal{K}}
\newcommand{\Hil}{\mathcal{H}}
\newcommand{\E}{\mathcal{E}}
\newcommand{\DD}{\mathcal{D}}
\newcommand{\W}{\mathcal{W}}
\newcommand{\hti}{\tilde{h}}
\newcommand{\Wti}{\tilde{W}}
\newcommand{\pti}{\tilde{p}}
\newcommand{\xiti}{\tilde{\xi}}
\newcommand{\fF}{\frak{F}}
\newcommand{\frA}{\frak A}
\def\bs{{\mbox{\boldmath{$s$}}}}
\newcommand{\Aut}{\text{Aut}}
\newcommand{\SO}{\text{SO}}
\newcommand{\OO}{O}
\newcommand{\GL}{\mathrm{GL}}
\newcommand{\SL}{\text{SL}}
\newcommand{\CAR}{\mathrm{CAR}\,}
\newcommand{\supp}{\text{supp}\,}
\begin{document} 

\par 
\bigskip 
\LARGE 
\noindent
{\bf Deformations of quantum field theories\\ on spacetimes with Killing vector fields}
\bigskip 
\par 
\rm 
\large
\noindent 
{\bf Claudio Dappiaggi$^{1,a}$}, {\bf Gandalf Lechner$^{2,b}$}, {\bf Eric Morfa-Morales$^{3,c}$} \\

\par
\small
\noindent $^1$  
II. Institut f\"ur Theoretische Physik,
D-22763 Hamburg, Deutschland

\vskip .2cm

\noindent $^2$ 
Faculty of Physics, University of Vienna, A-1090 Vienna, Austria\smallskip

\vskip .2cm

\noindent $^3$
Erwin Schr\"odinger Institute for Mathematical Physics, Vienna, A-1090 Vienna, Austria\smallskip

\vskip .2cm

\noindent $^a$claudio.dappiaggi@esi.ac.at,
$^b$gandalf.lechner@univie.ac.at, $^c$emorfamo@esi.ac.at\\ 
 \normalsize

\par

\par 
\bigskip 

\noindent 
\small 
{\bf Abstract}.
The recent construction and analysis of deformations of quantum field theories by warped convolutions is extended to a class of curved spacetimes. These spacetimes carry a family of wedge-like regions which share the essential causal properties of the Poincar\'e transforms of the Rindler wedge in Minkowski space. In the setting of deformed quantum field theories, they play the role of typical localization regions of quantum fields and observables. As a concrete example of such a procedure, the deformation of the free Dirac field is studied.

\normalsize
\bigskip 

\normalsize
\bigskip


\section{Introduction}

Deformations of quantum field theories arise in different contexts and have been studied from different points of view in recent years. One motivation for considering such models is a possible noncommutative structure of spacetime at small scales, as suggested by combining classical gravity and the uncertainty principle of quantum physics \cite{DoplicherFredenhagenRoberts:1995}. Quantum field theories on such noncommutative spaces can then be seen as deformations of usual quantum field theories, and it is hoped that they might capture some aspects of a still elusive theory of quantum gravity ({\em cf.} \cite{Szabo:2003} for a review). By now there exist several different types of deformed quantum field theories, see \cite{GrosseWulkenhaar:2005,BalachandranPinzulQureshiVaidya:2007,Soloviev:2008,GrosseLechner:2008,BlaschkeGieresKronbergerSchwedaWohlgenannt:2008,BahnsDoplicherFredenhagenPiacitelli:2010} for some recent papers, and references cited therein.

Certain deformation techniques arising from such considerations can also be used as a device for the construction of new models in the framework of usual quantum field theory on commutative spaces  \cite{GrosseLechner:2007,BuchholzSummers:2008,GrosseLechner:2008,BuchholzLechnerSummers:2010,LongoWitten:2010}, independent of their connection to the idea of noncommutative spaces. From this point of view, the deformation parameter plays the role of a coupling constant which changes the interaction of the model under consideration, but leaves the classical structure of spacetime unchanged.

Deformations designed for either describing noncommutative spacetimes or for constructing new models on ordinary spacetimes have been studied mostly in the case of a flat manifold, either with a Euclidean or Lorentzian signature. In fact, many approaches rely on a preferred choice of Cartesian coordinates in their very formulation, and do not generalize directly to curved spacetimes. The analysis of the interplay between spacetime curvature and deformations involving noncommutative structures thus presents a challenging problem. As a first step in this direction, we study in the present paper how certain deformed quantum field theories can be formulated in the presence of external gravitational fields ({\em i.e.}, on curved spacetime manifolds), see also \cite{AschieriBlohmannDimitrijevicMeyerSchuppWess:2005,OhlSchenkel:2009} for other approaches to this question. We will not address here the fundamental question of dynamically coupling the matter fields with a possible noncommutative geometry of spacetime  \cite{PaschkeVerch:2004, Steinacker:2007}, but rather consider as an intermediate step deformed quantum field theories on a fixed Lorentzian background manifold $M$.
\\\\
A deformation technique which is well suited for our purposes is that of warped convolutions, see \cite{BuchholzSummers:2008} and \cite{GrosseLechner:2007,GrosseLechner:2008} for precursors and related work. Starting from a Hilbert space $\Hil$ carrying a representation $U$ of $\Rl^n$, the warped convolution $A_Q$ of an operator $A$ on $\Hil$ is defined as
\begin{align}\label{warp}
 A_Q = (2\pi)^{-n}\int d^nx\,d^ny\,e^{-ixy}\,U(Qx)AU(y-Qx)\,.
\end{align}
Here $Q$ is an antisymmetric $(n\times n)$-matrix playing the role of deformation parameter, and the integral can be defined in an oscillatory sense if $A$ and $U$ meet certain regularity requirements. For deformations of a single algebra, the mapping $A\to A_Q$ has many features in common with deformation quantization and the Weyl-Moyal product, and in fact was recently shown \cite{BuchholzLechnerSummers:2010} to be equivalent to specific representations of Rieffel's deformed $C^*$-algebras with $\Rl^n$-action \cite{Rieffel:1992}. In application to field theory models, however, one has to deform a whole family of algebras, corresponding to subsystems localized in spacetime, and the parameter $Q$ has to be replaced by a family of matrices $\{Q\}$ adapted to the geometry of the underlying spacetime.

To apply this scheme to quantum field theories on curved manifolds, we will consider spacetimes with a sufficiently large isometry group containing two commuting Killing fields, which give rise to a representation of $\Rl^2$ as required in \eqref{warp}. This setting is wide enough to encompass a number of cosmologically relevant manifolds such as Friedmann-Robertson-Walker spacetimes, or Bianchi models. Making use of the algebraic framework of quantum field theory \cite{Haag:1996, Araki:1999}, we can then formulate quantum field theories in an operator-algebraic language and study their deformations. Despite the fact that the warped convolution was invented for the deformation of Minkowski space quantum field theories, it turns out that all reference to the particular structure of flat spacetime, such as Poincar\'e transformations and a Poincar\'e invariant vacuum state, can be avoided.

We are interested in understanding to what extent the familiar structure of quantum field theories on curved spacetimes is preserved under such deformations, and investigate in particular covariance and localization properties. Concerning locality, it is known that in warped models on Minkowski space, point-like localization is weakened to localization in certain infinitely extended, wedge-shaped regions \cite{GrosseLechner:2007,BuchholzSummers:2008,GrosseLechner:2008,BuchholzLechnerSummers:2010}. These regions are defined as Poincar\'e transforms of the Rindler wedge
\begin{align}
 W_R := \{(x_0,x_1,x_2,x_3)\in \Rl^4 \,:\, x_1 > |x_0| \}
\,.
\end{align}
Because of their intimate relation to the Poincar\'e symmetry of Minkowski spacetime, it is not obvious what a good replacement for such a collection of regions is in the presence of non-vanishing curvature. In fact, different definitions are possible, and wedges on special manifolds have been studied by many authors in the literature \cite{Kay:1985,BorchersBuchholz:1999,Rehren:2000,BuchholzMundSummers:2001,GuidoLongoRobertsVerch:2001,BuchholzSummers:2004-2,LauridsenRibeiro:2007,Strich:2008,Borchers:2009}.

In Section \ref{sec:geometry}, the first main part of our investigation, we show that on those four-dimensional curved spacetimes which allow for the application of the deformation methods in \cite{BuchholzLechnerSummers:2010}, and thus carry two commuting Killing fields, there also exists a family of wedges with causal properties analogous to the Minkowski space wedges. Because of the prominent role wedges play in many areas of Minkowski space quantum field theory \cite{BisognanoWichmann:1975, Borchers:1992, Borchers:2000, BuchholzDreyerFlorigSummers:2000, BrunettiGuidoLongo:2002}, this geometric and manifestly covariant construction is also of interest independently of its relation to deformations.

In Section \ref{sec:deformation}, we then consider quantum field theories on curved spacetimes, and deform them by warped convolution. We first show that these deformations can be carried through in a model-independent, operator-algebraic framework, and that the emerging models share many structural properties with deformations of field theories on flat spacetime (Section \ref{sec:generaldeformations}). In particular, deformed quantum fields are localized in the wedges of the considered spacetime. These and further aspects of deformed quantum field theories are also discussed in the concrete example of a Dirac field in Section \ref{sec:dirac}. Section \ref{sec:outlook} contains our conclusions.


\section{Geometric setup}\label{sec:geometry}

To prepare the ground for our discussion of deformations of quantum field theories on curved backgrounds, we introduce in this section a suitable class of spacetimes and study their geometrical properties. In particular, we show how the concept of {\em wedges}, known from Minkowski space, generalizes to these manifolds. Recall in preparation that a wedge in four-dimensional Minkowski space is a region which is bounded by two non-parallel characteristic hyperplanes \cite{ThomasWichmann:1997}, or, equivalently, a region which is a connected component of the causal complement of a two-dimensional spacelike plane. The latter definition has a natural analogue in the curved setting. Making use of this observation, we construct corresponding wedge regions in Section \ref{sec:edges+wedges}, and analyse their covariance, causality and inclusion properties. At the end of that section, we compare our notion of wedges to other definitions which have been made in the literature \cite{BorchersBuchholz:1999, BuchholzMundSummers:2001, BuchholzSummers:2004-2, LauridsenRibeiro:2007, Borchers:2009}, and point out the similarities and differences.

In Section \ref{sec:examples}, the abstract analysis of wedge regions is complemented by a number of concrete examples of spacetimes fulfilling our assumptions.

\subsection{Edges and wedges in curved spacetimes}\label{sec:edges+wedges}

In the following, a spacetime $(M,g)$ is understood to be a four-dimensional, Hausdorff, (arcwise) connected, smooth manifold $M$ endowed with a smooth, Lorentzian metric $g$ whose signature is $(+,-,-,-)$. Notice that it is automatically guaranteed that $M$ is also paracompact and second countable \cite{Geroch:1968,Geroch:1970}. The (open) {\em causal complement} of a set $\OO\subset M$ is defined as
\begin{align}
 \OO ' := M \backslash \Big[\overline{J^+(\OO)}\cup \overline{J^-(\OO)}\Big]\,,
\end{align}
where $J^\pm(\OO)$ is the causal future respectively past of $\OO$ in $M$ \cite[Section 8.1]{Wald:1984}.

To avoid pathological geometric situations such as closed causal curves, and also to define a full-fledged Cauchy problem for a free field theory whose dynamics is determined by a second order hyperbolic partial differential equation, we will restrict ourselves to globally hyperbolic spacetimes. So in particular, $M$ is orientable and time-orientable, and we fix both orientations. While this setting is standard in quantum field theory on curved backgrounds, we will make additional assumptions regarding the structure of the isometry group $\iso(M,g)$ of $(M,g)$, motivated by our desire to define wedges in $M$ which resemble those in Minkowski space.

Our most important assumption on the structure of $(M,g)$ is that it admits two linearly independent, spacelike, complete, commuting smooth Killing fields $\xi_1,\xi_2$, which will later be essential in the context of deformed quantum field theories. We refer here and in the following always to pointwise linear independence, which entails in particular that these vector fields have no zeros. Denoting the flows of $\xi_1,\xi_2$ by $\flow_{\xi_1},\flow_{\xi_2}$, the orbit of a point $p\in M$ is a smooth two-dimensional spacelike embedded submanifold of $M$,
\begin{align}\label{edge1}
 E := \{\flow_{\xi_1,s_1}(\flow_{\xi_2,s_2}(p))\in M \,:\, s_1,s_2\in\Rl\}\,,
\end{align}
where $s_1,s_2$ are the flow parameters of $\xi_1,\xi_2$.

Since $M$ is globally hyperbolic, it is isometric to a smooth product manifold $\Rl\times\Sigma$, where $\Sigma$ is a smooth, three-dimensional embedded Cauchy hypersurface. It is known that the metric splits according to $g=\beta d\mathcal{T}^2- h$ with a temporal function $\mathcal{T}:\Rl\times\Sigma\to\Rl$ and a positive function $\beta\in C^\infty(\Rl\times\Sigma,(0,\infty))$, while $h$ induces a smooth Riemannian metric on $\Sigma$ \cite[Thm. 2.1]{BernalSanchez:2005}.  We assume that, with $E$ as in \eqref{edge1}, the Cauchy surface $\Sigma$ is smoothly homeomorphic to a product manifold $I\times E$, where $I$ is an open interval or the full real line. Thus $M\cong \Rl\times I\times E$, and we require in addition that there exists a smooth embedding $\iota:\Rl\times I\to M$. By our assumption on the topology of $I$, it follows that $(\Rl\times I,\iota^*g)$ is a globally hyperbolic spacetime without null focal points, a feature that we will need in the subsequent construction of wedge regions.

\begin{definition}\label{admissible}
A spacetime $(M,g)$ is called admissible if it admits two linearly independent, spacelike, complete, commuting, smooth Killing fields $\xi_1,\xi_2$ and the corresponding split $M\cong \Rl\times I\times E$, with $E$ defined in \eqref{edge1}, has the properties described above.

The set of all ordered pairs $\xi:=(\xi_1,\xi_2)$ satisfying these conditions for a given admissible spacetime $(M,g)$ is denoted $\Xi(M,g)$. The elements of $\Xi(M,g)$ will be referred to as Killing pairs.
\end{definition}

For the remainder of this section, we will work with an arbitrary but fixed admissible spacetime $(M,g)$, and usually drop the $(M,g)$-dependence of various objects in our notation, {\it e.g.}, write $\Xi$ instead of $\Xi(M,g)$ for the set of Killing pairs, and $\iso$ in place of $\iso(M,g)$ for the isometry group. Concrete examples of admissible spacetimes, such as Friedmann-Robertson-Walker-, Kasner- and Bianchi-spacetimes, will be discussed in Section \ref{sec:examples}.

The flow of a Killing pair $\xi\in\Xi$ is written as
\begin{align}
 \flow_{\xi,s}:=\flow_{\xi_1,s_1}\circ \flow_{\xi_2,s_2}=\flow_{\xi_2,s_2}\circ \flow_{\xi_1,s_1}\,,\qquad \xi=(\xi_1,\xi_2)\in\Xi,\;\;s=(s_1,s_2)\in\Rl^2,
\end{align}
where $s_1,s_2\in\Rl$ are the parameters of the (complete) flows $\flow_{\xi_1}, \flow_{\xi_2}$ of $\xi_1,\xi_2$. By construction, each $\flow_\xi$ is an isometric $\Rl^2$-action by diffeomorphisms on $(M,g)$, {\it i.e.}, $\varphi_{\xi,s}\in \iso$ and $\varphi_{\xi,s}\varphi_{\xi,u}=\varphi_{\xi,s+u}$ for all $s,u\in\Rl^2$.

On the set $\Xi$, the isometry group $\iso$ and the general linear group $\GL(2,\Rl)$ act in a natural manner.
\begin{lemma}\label{lemma:group-xi}
 Let $h\in \iso$, $N\in \GL(2,\Rl)$, and define, $\xi=(\xi_1,\xi_2)\in\Xi$,
\begin{align}
 h_*\xi &:= (h_*\xi_1,\,h_*\xi_2)\,,\\
 (N\xi)(p) &:= N(\xi_1(p),\xi_2(p))\,,\qquad p\in M\,.
\end{align}
These operations are commuting group actions of \,$\iso$ and $\GL(2,\Rl)$ on $\Xi$, respectively. The $\GL(2,\Rl)$-action transforms the flow of $\xi\in\Xi$ according to, $s\in\Rl^2$,
\begin{align}\label{N-flow}
 \varphi_{N\xi,s} &= \varphi_{\xi,N^Ts}\,.
\end{align}
If $h_*\xi=N\xi$ for some $\xi\in\Xi$, $h\in\iso$, $N\in\GL(2,\Rl)$, then $\det N=\pm1$.
\end{lemma}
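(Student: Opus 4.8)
The plan is to exploit the fact that the flows $\flow_{\xi,s}$ are isometries and to track what the hypothesis $h_*\xi = N\xi$ says about the pushforward of these flows. First I would recall the elementary fact from differential geometry that for a diffeomorphism $h$ and a complete Killing field $X$, the pushforward $h_*X$ is again a Killing field whose flow is $h\circ\flow_X\circ h^{-1}$. Applying this to the Killing pair, $h_*\xi$ has flow $s\mapsto h\circ\flow_{\xi,s}\circ h^{-1}$. On the other hand, by \eqref{N-flow} the Killing pair $N\xi$ has flow $\flow_{N\xi,s} = \flow_{\xi,N^Ts}$. Hence the hypothesis $h_*\xi = N\xi$ gives the identity of $\Rl^2$-actions
\begin{align}
 h\circ\flow_{\xi,s}\circ h^{-1} = \flow_{\xi,N^Ts}\,,\qquad s\in\Rl^2\,.
\end{align}

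The next step is to turn this into a statement about a single orbit. Fix a point $p\in M$ and consider its orbit $E$ as in \eqref{edge1}; it is a two-dimensional embedded submanifold on which $\flow_\xi$ acts, and since $\xi_1,\xi_2$ are pointwise linearly independent this action is locally free, so $E$ is locally diffeomorphic to $\Rl^2$ via the flow map $s\mapsto\flow_{\xi,s}(p)$. The displayed identity shows that $h$ maps the orbit $E$ through $p$ onto the orbit $E'$ through $h(p)$, intertwining the $\flow_\xi$-action on $E$ with its precomposition by the linear map $s\mapsto N^Ts$ (up to the identification of the orbits with $\Rl^2$, after choosing base points and passing to the universal cover of the orbit if it is a cylinder or torus rather than a plane). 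Concretely, in flow coordinates $h$ restricted to $E$ looks like the affine map $s\mapsto N^Ts + c$ for a constant $c$ determined by where $h(p)$ sits on $E'$.

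Finally, to extract $\det N = \pm 1$ I would use that $h$ is an isometry and that the Killing fields are the same ($\xi$) on both sides. The orbit $E$ carries the induced Riemannian metric from $g$, and the vector fields $\xi_1,\xi_2$ span its tangent spaces; the Gram matrix $g_{ij}:=g(\xi_i,\xi_j)$ is a smooth, positive-definite, $\flow_\xi$-invariant function on $E$. Since $h$ is an isometry, it preserves the induced volume form on orbits, i.e. the density $\sqrt{\det(g_{ij})}$ pulled back along $h$ equals that of the target orbit; but because $h_*\xi_i = \sum_j N_{ij}\xi_j$, comparing the two frames at a point $q$ and its image $h(q)$ yields $\det(g_{ij})(h(q)) \cdot (\det N)^{-2}\cdot(\text{ratio of the two volume densities}) = \det(g_{ij})(q)$ — more cleanly: $h^*$ of the target volume $2$-form is $(\det N)$ times the source volume $2$-form (since $h_*$ acts on the frame by $N$), while $h$ being an isometry forces $h^*$ of the target volume form to equal the source volume form, whence $|\det N| = 1$. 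Alternatively, and perhaps most transparently, one integrates: the map $\flow_{\xi,s}$ is measure-preserving for the induced orbit measure (it is an isometry), so the identity $h\circ\flow_{\xi,s} = \flow_{\xi,N^Ts}\circ h$ combined with the change-of-variables factor $|\det N^T| = |\det N|$ coming from $s\mapsto N^Ts$ forces $|\det N| = 1$.

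The main obstacle I anticipate is purely bookkeeping: making rigorous the passage from the flow identity on all of $M$ to a clean linear-algebra statement on a single orbit, in particular handling the case where an orbit is not simply connected (a cylinder $\Rl\times S^1$ or a torus), where the flow map $\Rl^2\to E$ is only a covering rather than a diffeomorphism. This is handled by lifting to the universal cover or, more economically, by working infinitesimally: evaluate everything at the level of tangent spaces at a single point $p$, where $h_{*p}$ carries the frame $(\xi_1(p),\xi_2(p))$ to $(\sum_j N_{1j}\xi_j(h(p)),\sum_j N_{2j}\xi_j(h(p)))$ inside $T_{h(p)}E'$, and since $h_{*p}$ is a linear isometry between the $2$-planes $T_pE$ and $T_{h(p)}E'$ (with their induced inner products), taking determinants of the Gram matrices of the two frames gives $\det(g_{ij}(p)) = (\det N)^2\det(g_{ij}(h(p)))$ — this alone does not yet give $|\det N|=1$ unless one also knows the two Gram determinants agree, which is exactly what the global flow-intertwining relation supplies by identifying the invariant functions. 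So the cleanest route avoids orbit topology entirely and reads off $|\det N|=1$ from the volume-form / Jacobian argument above; the topological subtlety is then a non-issue.
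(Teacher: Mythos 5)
Your handling of the first two assertions (the two commuting group actions and the flow law \eqref{N-flow}) is unproblematic and agrees with the paper, and for the last assertion your strategy is also essentially the paper's: restrict attention to an edge with its induced Riemannian metric and exploit that $h$ is an isometry carrying the frame $(\xi_1,\xi_2)$ into the $N$-transformed frame. The difficulty is that the decisive step is not actually proved. What the isometry property yields is a relation between Gram matrices at \emph{two different points}, namely $G(q)=N\,G(h(q))\,N^{T}$ with $G_{ij}(q):=-g_q(\xi_i,\xi_j)$, hence $\det G(q)=(\det N)^2\det G(h(q))$; you notice this yourself and remark that one still needs $\det G(q)=\det G(h(q))$. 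Your claim that the intertwining identity $h\circ\flow_{\xi,s}\circ h^{-1}=\flow_{\xi,N^{T}s}$ ``supplies'' this equality is unjustified: $G$ is indeed constant along each orbit (the flows are isometries leaving the $\xi_i$ invariant), but $h$ maps the orbit $E_{\xi,q}$ onto the orbit $E_{\xi,h(q)}$, which is in general a \emph{different} orbit, and nothing in the intertwining relation controls how the orbit-wise constant function $\det G$ varies from one orbit to another. The measure-theoretic variant is circular for the same reason: the Jacobian $|\det N|$ coming from $s\mapsto N^{T}s$ is exactly compensated by the ratio of the volume densities $\sqrt{\det G}$ of the two orbits expressed in flow coordinates, so no constraint on $\det N$ results.

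That this is not mere bookkeeping can be seen from warped situations in which an isometry translates in the $I$-direction while dilating the edge directions: for $ds^2=dt^2-dx^2-e^{2x}(dy^2+dz^2)$ with $\xi=(\partial_y,\partial_z)$ and $h(t,x,y,z)=(t,x+a,e^{-a}y,e^{-a}z)$ one has $h\in\iso$, $h_*\xi=N\xi$ with $N=e^{-a}\mathbf{1}_2$, and every relation you derive (the flow intertwining, the Gram-matrix identity, the volume-form bookkeeping) is satisfied with $|\det N|\neq 1$. So the relations you establish cannot by themselves force $|\det N|=1$; an additional geometric input reducing the comparison to a single tangent space $T_pE$ is indispensable. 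This is precisely the point at which the paper's proof asserts that $N$ acts as an isometry on $T_pE$ (and then concludes $N\in{\rm O}(2)$), an assertion which in effect presupposes that the Gram matrix takes the same value at $q$ and $h(q)$. Your proposal correctly isolates this as the crux but does not establish it, so the argument as written has a genuine gap at its final and only nontrivial step.
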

\begin{proof}
Due to the standard properties of isometries, $\iso$ acts on the Lie algebra of Killing fields by the push-forward  isomorphisms $\xi_1\mapsto h_*\xi_1$ \cite{O'Neill:1983}. Therefore, for any $(\xi_1,\xi_2)\in\Xi$, also the vector fields $h_*\xi_1, h_*\xi_2$ are spacelike, complete, commuting, linearly independent, smooth Killing fields. The demanded properties of the splitting $M\cong \Rl\times I\times E$ directly carry over to the corresponding split with respect to $h_*\xi$. So $h_*$ maps $\Xi$ onto $\Xi$, and since $h_*(k_*\xi_1)=(hk)_*\xi_1$ for $h,k\in\iso$,  we have an action of $\iso$.

The second map, $\xi\mapsto N\xi$, amounts to taking linear combinations of the Killing fields $\xi_1,\xi_2$. The relation \eqref{N-flow} holds because $\xi_1,\xi_2$ commute and are complete, which entails that the respective flows can be constructed via the exponential map. Since $\det N\neq0$, the two components of $N\xi$ are still linearly independent, and since $E$ \eqref{edge1} is invariant under $\xi\mapsto N\xi$, the splitting $M\cong \Rl\times I\times E$ is the same for $\xi$ and $N\xi$. Hence $N\xi\in\Xi$, {\it i.e.}, $\xi\mapsto N\xi$ is a $\GL(2,\Rl)$-action on $\Xi$, and since the push-forward is linear, it is clear that the two actions commute.

To prove the last statement, we consider the submanifold $E$ \eqref{edge1} together with its induced metric. Since the Killing fields $\xi_1,\xi_2$ are tangent to $E$, their flows are isometries of $E$. Since $h_*\xi=N\xi$ and $E$ is two-dimensional, it follows that $N$ acts as an isometry on the tangent space $T_pE$, $p\in E$. But as $E$ is spacelike and two-dimensional, we can assume without loss of generality that the metric of $T_pE$ is the Euclidean metric, and therefore has the two-dimensional Euclidean group ${\rm E}(2)$ as its isometry group. Thus $N\in\GL(2,\Rl)\cap{\rm E}(2)={\rm O}(2)$, {\it i.e.}, $\det N=\pm1$.
\end{proof}

The $\GL(2,\Rl)$-transformation given by the flip matrix $\Pi:=\left(0\;\;1\atop1\;\;0\right)$ will play a special role later on. We therefore reserve the name {\em inverted Killing pair} of $\xi=(\xi_1,\xi_2)\in\Xi$ for
\begin{align}\label{def:xiprime}
 \xi'&:=\Pi\xi=(\xi_2,\xi_1)\,.
\end{align}
Note that since we consider ordered tuples $(\xi_1,\xi_2)$, the Killing pairs $\xi$ and $\xi'$ are not identical. Clearly, the map $\xi\mapsto\xi'$ is an involution on $\Xi$, {\it i.e.}, $(\xi')'=\xi$.
\\
\\
After these preparations, we turn to the construction of wedge regions in $M$, and begin by specifying their {\em edges}.
\begin{definition}\label{edgedef}
An edge is a subset of $M$ which has the form
\begin{align}\label{def:edge}
 E_{\xi,p}:=\{\flow_{\xi,s}(p)\in M\,:\,s\in\Rl^2\}
\end{align}
for some $\xi\in\Xi$, $p\in M$. Any spacelike vector $n_{\xi,p}\in T_pM$ which completes the gradient of the chosen temporal function and the Killing vectors $\xi_1(p),\xi_2(p)$ to a positively oriented basis $(\nabla\mathcal{T}(p),\xi_1(p),\xi_2(p),n_{\xi,p})$ of $T_pM$ is called an oriented normal of $E_{\xi,p}$.
\end{definition}

It is clear from this definition that each edge is a two-dimensional, spacelike, smooth submanifold of $M$. Our definition of admissible spacetimes $M\cong \Rl\times I\times E$ explicitly restricts the topology of $I$, but not of the edge \eqref{edge1}, which can be homeomorphic to a plane, cylinder, or torus.

Note also that the description of the edge $E_{\xi,p}$ in terms of $\xi$ and $p$ is somewhat redundant: Replacing the Killing fields $\xi_1,\xi_2$ by linear combinations $\xiti:=N\xi$, $N\in\GL(2,\Rl)$, or replacing $p$ by $\pti:=\varphi_{\xi,u}(p)$ with some $u\in\Rl^2$, results in the same manifold $E_{\xiti,\pti}=E_{\xi,p}$.
\\\\
Before we define wedges as connected components of causal complements of edges, we have to prove the following key lemma, from which the relevant properties of wedges follow. For its proof, it might be helpful to visualize the geometrical situation as sketched in Figure \ref{fig:wedge}.

\begin{figure}[htbp]
  \centering
    \includegraphics[width=0.5\textwidth]{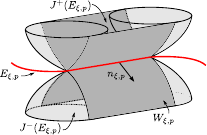}\\
  \caption{\em Three-dimensional sketch of the wedge $W_{\xi,p}$ and its edge $E_{\xi,p}$}
  \label{fig:wedge}
\end{figure}

\begin{lemma}\label{edgeprop}
 The causal complement $E_{\xi,p}'$ of an edge $E_{\xi,p}$ is the disjoint union of two connected components, which are causal complements of each other.
\end{lemma}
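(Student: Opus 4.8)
The plan is to exploit the product structure $M\cong\Rl\times I\times E$ promised by admissibility, together with the fact that the edge $E_{\xi,p}$ is (after applying a suitable $N\in\GL(2,\Rl)$ and flowing by $\varphi_\xi$, which by the remark following Definition~\ref{edgedef} changes neither $E_{\xi,p}$ nor its causal complement) exactly a slice $\{t_0\}\times\{r_0\}\times E$ of this product. The key point is that the causal complement of the edge is governed entirely by the two-dimensional globally hyperbolic spacetime $(\Rl\times I,\iota^*g)$, which by assumption has no null focal points: a point $q=(t,r,e)$ lies in $E_{\xi,p}'$ if and only if its projection $(t,r)$ cannot be connected to $(t_0,r_0)$ by a causal curve in $(\Rl\times I,\iota^*g)$. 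Indeed, the Killing flows act trivially on the $(\Rl\times I)$-factor and transitively on the $E$-factor, so causal curves in $M$ emanating from the edge project to causal curves in $\Rl\times I$ starting at $(t_0,r_0)$, and conversely any such planar causal curve lifts.

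With this reduction in hand, the first step is to show that in the two-dimensional globally hyperbolic spacetime $(\Rl\times I,\iota^*g)$ the causal complement of a single point $z_0=(t_0,r_0)$ has exactly two connected components, a "left" one $L$ and a "right" one $R$, lying on either side of $z_0$ in the $I$-direction. Here one uses that $\Rl\times I$ is two-dimensional and globally hyperbolic with $I$ an interval, so that $J^+(z_0)\cup J^-(z_0)$ is a closed "double cone'' whose complement visibly falls into the two stated pieces; the absence of null focal points guarantees the boundary of this set is generated by the null geodesics through $z_0$ and does not reconnect, so the two pieces do not merge. The second step is to pull this back to $M$: since $E_{\xi,p}'$ is the preimage of $\{z_0\}'\subset\Rl\times I$ under the (continuous, open) projection $M\to\Rl\times I$ with connected fibres $\cong E$, it follows that $E_{\xi,p}'=\pi^{-1}(L)\sqcup\pi^{-1}(R)$ is a disjoint union of two connected sets.

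The third and final step is to verify that these two components are causal complements of each other, i.e.\ $\pi^{-1}(L)'=\pi^{-1}(R)$ and vice versa. One inclusion, $\pi^{-1}(R)\subset\pi^{-1}(L)'$, is immediate since $R$ and $L$ are already causally disjoint in the plane (any causal curve between them would have to pass through $z_0$, hence meet the edge). For the reverse inclusion one argues that any point $q\notin\pi^{-1}(R)$ either lies in $\pi^{-1}(L)$ — in which case it is not in $\pi^{-1}(L)'$ — or has $\pi(q)\in\overline{J^+(z_0)}\cup\overline{J^-(z_0)}$, and in the latter case, using again two-dimensionality and the structure of the double cone, one produces a causal curve in the plane from $\pi(q)$ into $L$ avoiding $z_0$, which lifts to a causal curve in $M$ from $q$ into $\pi^{-1}(L)$, so $q\notin\pi^{-1}(L)'$; hence $\pi^{-1}(L)'\subset\pi^{-1}(R)$.

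I expect the main obstacle to be the rigorous justification that causality in $M$ is faithfully captured by causality in the two-dimensional factor $(\Rl\times I,\iota^*g)$ — in particular that a causal curve in $M$ from a point of the edge to $q$ really does project to a \emph{causal} (not merely continuous) curve in $\Rl\times I$, and conversely that planar causal curves lift. This is where the precise form of the metric split $g=\beta\,d\mathcal{T}^2-h$, the fact that $\iota:\Rl\times I\to M$ is an isometric embedding with the Killing directions orthogonal to it, and the no-null-focal-points hypothesis all have to be combined carefully; the purely two-dimensional topological bookkeeping of the double cone and its complement is then comparatively routine.
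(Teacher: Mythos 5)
There is a genuine gap, and it sits exactly where you yourself flag the ``main obstacle'': the claimed reduction of causality in $M$ to causality in the two-dimensional factor $(\Rl\times I,\iota^*g)$ is not justified by Definition \ref{admissible}, and is false in general. Admissibility only provides a smooth splitting $M\cong\Rl\times I\times E$ and a smooth embedding $\iota:\Rl\times I\to M$; it does \emph{not} say that the metric is block-diagonal with the Killing directions orthogonal to $\iota(\Rl\times I)$. Indeed, the paper points this out explicitly after Proposition \ref{prop:metric}: an extra reflection isometry is needed to kill the ``$dx\,dy$'' and ``$dx\,dz$'' cross terms, so metrics with $g(\xi_i,\partial_x)\neq0$ are admissible. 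With such cross terms the coordinate projection of a causal curve in $M$ need no longer be causal for $\iota^*g$ (e.g.\ for $g=dt^2-dx^2-(dy+a\,dx)^2-dz^2$ a causal vector with $v^y=-av^x$ projects to a vector that is $\iota^*g$-spacelike for suitable $v^t$), so your central equivalence ``$q\in E_{\xi,p}'$ iff $\pi(q)$ is causally disconnected from $(t_0,r_0)$ in $(\Rl\times I,\iota^*g)$'' breaks down; only the lifting direction (a causal curve in the image of the isometric embedding is causal in $M$) is safe. Since the whole two-component statement is derived from this dictionary, the proof as proposed does not go through for general admissible spacetimes, and nothing in the proposal repairs it.

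For comparison, the paper avoids any causal reduction to a lower-dimensional factor. It first connects an arbitrary $q\in E_{\xi,p}'$ to the base point $p$ by a smooth \emph{spacelike} curve with prescribed tangent vectors, using the Cauchy-surface extension theorem of \cite{BernalSanchez:2006}; this defines two candidate regions $W_{\xi,p}$ and $W_{\xi',p}$ covering $E_{\xi,p}'$, distinguished by the orientation of the initial and final normals. Disjointness is then proved by a purely \emph{topological} loop argument: a point in both regions would give a loop in $E_{\xi,p}'$ meeting the edge only at $p$, whose projection onto the one-dimensional factor $I$ (simply connected, hence no loops) together with the fact that the fibre $\Rl\times\{\pi(p)\}\times E_{\xi,p}$ lies in $J^+(E_{\xi,p})\cup E_{\xi,p}\cup J^-(E_{\xi,p})$ forces a contradiction. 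Finally, mutual causal complementarity is shown by a direct argument on causal curves between the two regions. If you want to salvage your strategy, you would have to either restrict to metrics of the block-diagonal form (as in Proposition \ref{prop:metric} with the additional reflection symmetry) or replace the projection onto $\Rl\times I$ by the weaker, purely topological projection onto $I$ that the paper actually uses.
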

\begin{proof}
We first show that any point $q\in E'_{\xi,p}$ is connected to the base point $p$ by a smooth, spacelike curve. Since $M$ is globally hyperbolic, there exist Cauchy surfaces $\Sigma_p,\Sigma_q$ passing through $p$ and $q$, respectively. We pick two compact subsets $K_q\subset \Sigma_q$, containing $q$, and $K_p\subset \Sigma_p$, containing $p$. If $K_p, K_q$ are chosen sufficiently small, their union $K_p\cup K_q$ is an acausal, compact, codimension one submanifold of $M$. It thus fulfils the hypothesis of Thm. 1.1 in \cite{BernalSanchez:2006}, which guarantees that there exists a spacelike Cauchy surface $\Sigma$ containing the said union. In particular, there exists a smooth, spacelike curve $\gamma$ connecting $p=\gamma(0)$ and $q=\gamma(1)$. Picking spacelike vectors $v\in T_p\Sigma$ and $w\in T_q\Sigma$, we have the freedom of choosing $\gamma$ in such a way that $\dot{\gamma}(0)=v$ and $\dot{\gamma}(1)=w$. If $v$ and $w$ are chosen linearly independent from $\xi_1(p),\xi_2(p)$ and $\xi_1(q),\xi_2(q)$, respectively, these vectors are oriented normals of $E_{\xi,p}$ respectively $E_{\xi,q}$, and we can select $\gamma$ such that it intersects the edge $E_{\xi,p}$ only in $p$.

Let us define the region
\begin{align}
\label{wedgechar}
W_{\xi,p}
&:=
\{q\in E_{\xi,p}':\;\exists\, \gamma\in C^1([0,1],M)\text{ with }\gamma(0)=p, \gamma(1)=q, E_{\xi,p}\cap\gamma=\{p\},\nonumber\\
&\qquad
\dot{\gamma}(0) \text{ is an oriented normal of } E_{\xi,p},\,\dot{\gamma}(1) \text{ is an oriented normal of } E_{\xi,q}
\}\,,
\end{align}
and, exchanging $\xi$ with the inverted Killing pair $\xi'$, we correspondingly define the region $W_{\xi',p}$. It is clear from the above argument that $W_{\xi,p}\cup W_{\xi',p}=E_{\xi,p}'$, and that we can prescribe arbitrary normals $n,m$ of $E_{\xi,p}$, $E_{\xi,q}$ as initial respectively final tangent vectors of the curve $\gamma$ connecting $p$ to $q\in W_{\xi,p}$.

The proof of the lemma consists in establishing that $W_{\xi,p}$ and $W_{\xi',p}$ are disjoint, and causal complements of each other. To prove disjointness of $W_{\xi,p}, W_{\xi',p}$, assume there exists a point $q\in W_{\xi,p}\cap W_{\xi',p}$. Then $q$ can be connected with the base point $p$ by two spacelike curves, whose tangent vectors satisfy the conditions in \eqref{wedgechar} with $\xi$ respectively $\xi'$. By joining these two curves, we have identified a continuous loop $\la$ in $E_{\xi,p}'$. As an oriented normal, the tangent vector $\dot{\la}(0)$ at $p$ is linearly independent of $\xi_1(p), \xi_2(p)$, so that $\la$ intersects $E_{\xi,p}$ only in $p$.

Recall that according to Definition \ref{admissible}, $M$ splits as the product $M\cong \Rl\times I\times E_{\xi,p}$, with an open interval $I$ which is smoothly embedded in $M$. Hence we can consider the projection $\pi(\la)$ of the loop $\la$ onto $I$, which is a closed interval $\pi(\la)\subset I$ because the simple connectedness of $I$ rules out the possibility that $\pi(\la)$ forms a loop, and on account of the linear independence of $\{\xi_1(p),\xi_2(p),n_{\xi,p}\}$, the projection cannot be just a single point. Yet, as $\la$ is a loop, there exists $p'\in\la$ such that $\pi(p')=\pi(p)$. We also know that $\pi^{-1}(\{\pi(p)\})=\Rl\times\{\pi(p)\}\times E_{\xi,p}$ is contained in $J^+(E_{\xi,p})\cup E_{\xi,p}\cup J^-(E_{\xi,p})$ and, since $p$ and $p'$ are causally separated, the only possibility left is that they both lie on the same edge. Yet, per construction, we know that the loop intersects the edge only once at $p$ and, thus, $p$ and $p'$ must coincide, which is the sought contradiction.

To verify the claim about causal complements, assume there exist points $q\in W_{\xi,p}$, $q'\in W_{\xi',p}$ and a causal curve $\gamma$ connecting them, $\gamma(0)=q$, $\gamma(1)=q'$. By definition of the causal complement, it is clear that $\gamma$ does not intersect $E_{\xi,p}$. In view of our restriction on the topology of $M$, it follows that $\gamma$ intersects either $J^+(E_{\xi,p})$ or $J^-(E_{\xi,p})$. These two cases are completely analogous, and we consider the latter one, where there exists a point $q''\in \gamma\cap J^-(E_{\xi,p})$. In this situation, we have a causal curve connecting $q\in W_{\xi,p}$ with $q''\in J^-(E_{\xi,p})$, and since $q\notin J^-(q'')\subset J^-(E_{\xi,p})$, it follows that $\gamma$ must be past directed. As the time orientation of $\gamma$ is the same for the whole curve, it follows that also the part of $\gamma$ connecting $q''$ and $q'$ is past directed. Hence $q'\in J^-(q'')\subset J^-(E_{\xi,p})$, which is a contradiction to $q'\in W_{\xi',p}$. Thus $W_{\xi',p}\subset {W_{\xi,p}}'$.

To show that $W_{\xi',p}$ coincides with ${W_{\xi,p}}'$, let $q\in {W_{\xi,p}}'=\overline{W_{\xi,p}}'\subset E_{\xi,p}' = W_{\xi,p}\sqcup W_{\xi',p}$. Yet $q\in W_{\xi,p}$ is not possible since $q\in {W_{\xi,p}}'$ and $W_{\xi,p}$ is open. So $q\in W_{\xi',p}$, {\em i.e.}, we have shown ${W_{\xi,p}}'\subset W_{\xi',p}$, and the claimed identity $W_{\xi',p} = {W_{\xi,p}}'$ follows.
\end{proof}

\begin{wrapfigure}{r}{46mm}
 \centering
{\includegraphics{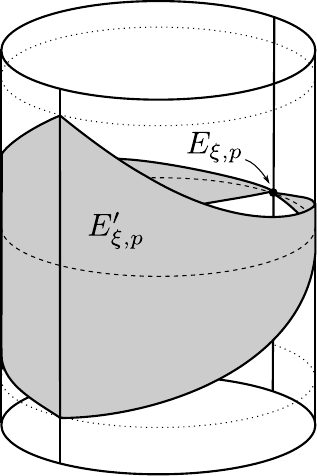}}\\
{\em \small $E_{\xi,p}'$ in a Lorentz cylinder}
\end{wrapfigure}

Lemma \ref{edgeprop} does not hold if the topological requirements on $M$ are dropped. As an example, consider a cylinder universe $\Rl\times S^1\times\Rl^2$, the product of the Lorentz cylinder $\Rl\times S^1$ \cite{O'Neill:1983} and the Euclidean plane $\Rl^2$. The translations in the last factor $\Rl^2$ define spacelike, complete, commuting, linearly independent Killing fields $\xi$.  Yet the causal complement of the edge $E_{\xi,p}=\{0\}\times\{1\}\times\Rl^2$ has only a single connected component, which has empty causal complement. In this situation, wedges lose many of the useful properties which we establish below for admissible spacetimes.
\\\\
In view of Lemma \ref{edgeprop}, wedges in $M$ can be defined as follows.

\begin{definition}{\bf (Wedges)}\\
A wedge is a subset of $M$ which is a connected component of the causal complement of an edge in $M$. Given $\xi\in\Xi$, $p\in M$, we denote by $W_{\xi,p}$ the component of $E_{\xi,p}'$ which intersects the curves $\gamma(t) := \exp_p(t\,n_{\xi,p})$, $t>0$, for any oriented normal $n_{\xi,p}$ of $E_{\xi,p}$. The family of all wedges is denoted
\begin{align}
  \W
:=
\{W_{\xi,p}\,:\, \xi\in\Xi,\,p\in M\}\,.
\end{align}
\end{definition}

As explained in the proof of Lemma \ref{edgeprop}, the condition that the curve $\Rl^+\ni t\mapsto \exp_p(t\,n_{\xi,p})$ intersects a connected component of $E_{\xi,p}'$ is independent of the chosen normal $n_{\xi,p}$, and each such curve intersects precisely one of the two components of $E_{\xi,p}'$.

Some properties of wedges which immediately follow from the construction carried out in the proof of Lemma \ref{edgeprop} are listed in the following proposition.

\begin{proposition}{\bf (Properties of wedges)}\label{prop:wedge-properties}
\\
 Let $W=W_{\xi,p}$ be a wedge. Then
\begin{enumerate}
 \item $W$ is causally complete, {\it i.e.}, $W''=W$, and hence globally hyperbolic.
\item The causal complement of a wedge is given by inverting its Killing pair,
\begin{align}\label{cc-inverted}
 (W_{\xi,p})' = W_{\xi',p}\,.
\end{align}
 \item A wedge is invariant under the Killing flow generating its edge,
\begin{align}
\flow_{\xi,s}(W_{\xi,p})=W_{\xi,p}\,,\qquad s\in\Rl^2\,.
\end{align}
\end{enumerate}
\end{proposition}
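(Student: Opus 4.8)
\textbf{Proof proposal for Proposition \ref{prop:wedge-properties}.}

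The plan is to prove the three statements in order, leaning heavily on Lemma \ref{edgeprop} and on the structure established in its proof. For part (a), the key observation is the chain of inclusions already contained in the proof of Lemma \ref{edgeprop}: we have shown there that $W_{\xi',p} = (W_{\xi,p})'$, and by the same argument applied to the inverted Killing pair $\xi'$ (noting $(\xi')' = \xi$) we get $W_{\xi,p} = (W_{\xi',p})'$. Combining these yields $W'' = (W_{\xi',p})' = W$ immediately. Global hyperbolicity of $W$ then follows from the general fact that a causally complete open subset of a globally hyperbolic spacetime which is itself causally convex is globally hyperbolic; alternatively, one invokes the product structure $M \cong \Rl \times I \times E_{\xi,p}$ and notes that $W$ inherits a Cauchy surface from one for $M$ intersected with $W$, using that the spacelike Cauchy surface constructed in the proof of Lemma \ref{edgeprop} (via Thm.~1.1 of \cite{BernalSanchez:2006}) can be arranged to pass through any point of $W$. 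I would keep this brief and cite the causal-complement manipulation as the substantive point.

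Part (b) is essentially a restatement of what was proved as the final step of Lemma \ref{edgeprop}, namely the identity $(W_{\xi,p})' = W_{\xi',p}$; here one only needs to check that the component of $E_{\xi,p}'$ singled out by the exponential-map condition in the Definition of wedges is indeed the one denoted $W_{\xi,p}$ in \eqref{wedgechar}, which is immediate since the curve $t \mapsto \exp_p(t\,n_{\xi,p})$ has initial tangent $n_{\xi,p}$, an oriented normal, and hence qualifies as an admissible $\gamma$ in \eqref{wedgechar} for small $t$. So (b) requires only a sentence tying the two notations together.

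For part (c), I would argue as follows. Fix $s \in \Rl^2$. Since $\flow_{\xi,s} \in \iso$ is an isometry, it maps causal complements to causal complements, so $\flow_{\xi,s}(E_{\xi,p}') = (\flow_{\xi,s}(E_{\xi,p}))'$. But $\flow_{\xi,s}(E_{\xi,p}) = E_{\xi,p}$, because the edge is by definition the orbit $\{\flow_{\xi,u}(p) : u \in \Rl^2\}$ and the flow is an $\Rl^2$-action, so $\flow_{\xi,s}$ merely permutes the orbit points $\flow_{\xi,u}(p) \mapsto \flow_{\xi,s+u}(p)$. Hence $\flow_{\xi,s}$ maps $E_{\xi,p}'$ onto itself, and being a homeomorphism it either fixes each of the two connected components or interchanges them. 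To rule out the interchange, I would use that $\flow_{\xi,s}$ is connected to the identity through the path $r \mapsto \flow_{\xi,rs}$, $r \in [0,1]$: the image $\flow_{\xi,rs}(W_{\xi,p})$ varies continuously, stays within the two fixed components $W_{\xi,p} \sqcup W_{\xi',p}$, and equals $W_{\xi,p}$ at $r=0$, so by connectedness of $[0,1]$ it remains $W_{\xi,p}$ throughout; thus $\flow_{\xi,s}(W_{\xi,p}) = W_{\xi,p}$. Alternatively, and perhaps more cleanly, one checks directly that $\flow_{\xi,s}$ preserves the exponential-map selection criterion: since $\flow_{\xi,s}$ fixes $E_{\xi,p}$ setwise and $(\flow_{\xi,s})_* n_{\xi,p}$ is again an oriented normal of $E_{\xi,p}$ at the point $\flow_{\xi,s}(p)$ (the flow being orientation-preserving and isometric, and commuting with $\xi_1,\xi_2$), the curve $t \mapsto \flow_{\xi,s}(\exp_p(t\,n_{\xi,p})) = \exp_{\flow_{\xi,s}(p)}(t\,(\flow_{\xi,s})_*n_{\xi,p})$ is of the form used to single out $W_{\xi,p}$, so its image lies in $W_{\xi,p}$. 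The main obstacle, such as it is, is the orientation bookkeeping in this last step — verifying that $\flow_{\xi,s}$ preserves the time-orientation and the chosen space-orientation and therefore sends oriented normals to oriented normals — but this is routine since $\flow_{\xi,s}$ lies in the identity component of $\iso$ and commutes with the Killing fields defining the edge. I would present the connectedness argument as the primary one, as it sidesteps the need to track orientations explicitly.
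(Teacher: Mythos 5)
Your proposal is correct and follows essentially the same route as the paper: causal completeness via the mutual-complement relation from Lemma \ref{edgeprop} (the paper writes it as $W''=V'''=V'=W$ and cites the general fact that causally complete subsets of globally hyperbolic spacetimes are globally hyperbolic), part (b) as a direct consequence of that lemma, and part (c) from invariance of the edge under its own flow plus a continuity/connectedness argument to see that the two components are not interchanged. The extra material (the exponential-map bookkeeping in (b), the oriented-normal alternative in (c)) is fine but not needed beyond what the paper records.
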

\begin{proof}
{\em a)} By Lemma \ref{edgeprop}, $W$ is the causal complement of another wedge $V$, and therefore causally complete: $W''=V'''=V'=W$. Since $M$ is globally hyperbolic, this implies that $W$ is globally hyperbolic, too \cite[Prop. 12.5]{Keyl:1996}.

{\em b)} This statement has already been checked in the proof of Lemma \ref{edgeprop}.

{\em c)} By definition of the edge $E_{\xi,p}$ \eqref{def:edge}, we have $\flow_{\xi,s}(E_{\xi,p})= E_{\xi,p}$ for any $s\in\Rl^2$, and since the $\flow_{\xi,s}$ are isometries, it follows that $\flow_{\xi,s}(E_{\xi,p}')= E_{\xi,p}'$. Continuity of the flow implies that also the two connected components of this set are invariant.
\end{proof}

\begin{corollary}{\bf (Properties of the family of wedge regions)}\\
 The family $\W$ of wedge regions is invariant under the isometry group $\iso$ and under taking causal complements. For $h\in \iso$, it holds
\begin{align}\label{h-W}
 h(W_{\xi,p})=W_{h_*\xi,h(p)}\,.
\end{align}
\end{corollary}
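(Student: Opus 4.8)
The statement has two parts: invariance of $\W$ under isometries (together with the explicit formula \eqref{h-W}), and invariance under causal complementation. The second part is essentially immediate from Proposition \ref{prop:wedge-properties}b): if $W = W_{\xi,p} \in \W$, then $W' = W_{\xi',p}$, and since $\xi' = \Pi\xi \in \Xi$ by Lemma \ref{lemma:group-xi} (the $\GL(2,\Rl)$-action), $W'$ is again a wedge with the same base point and the flipped Killing pair; hence $W' \in \W$. So the bulk of the work is establishing \eqref{h-W}, from which $\iso$-invariance of $\W$ follows at once (the right-hand side is manifestly an element of $\W$, using that $h_*\xi \in \Xi$ by Lemma \ref{lemma:group-xi}).

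\textbf{Proof of \eqref{h-W}.} First I would observe that $h$, being an isometry (hence a diffeomorphism preserving the causal structure), maps causal futures and pasts of a set $\OO$ to those of $h(\OO)$, and therefore maps causal complements to causal complements: $h(\OO') = h(\OO)'$. Applying this to the edge $E_{\xi,p}$, and using that $h \circ \flow_{\xi,s} = \flow_{h_*\xi,s} \circ h$ (the defining intertwining property of push-forwards of flows, already used in the proof of Lemma \ref{lemma:group-xi}), one gets $h(E_{\xi,p}) = E_{h_*\xi,h(p)}$ and consequently $h(E_{\xi,p}') = E_{h_*\xi,h(p)}'$. Since $h$ is a homeomorphism, it maps the two connected components of $E_{\xi,p}'$ bijectively onto the two connected components of $E_{h_*\xi,h(p)}'$. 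It remains only to check that $h$ sends the \emph{distinguished} component $W_{\xi,p}$ to the distinguished component $W_{h_*\xi,h(p)}$, i.e. that it respects the orientation convention selecting $W$ among the two components.

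\textbf{Matching the distinguished components.} Here I would use the characterization of $W_{\xi,p}$ from the Definition of wedges: $W_{\xi,p}$ is the component met by the curves $t \mapsto \exp_p(t\,n_{\xi,p})$, $t>0$, for an oriented normal $n_{\xi,p}$ of $E_{\xi,p}$. Since $h$ is an isometry, it commutes with the exponential map, $h(\exp_p(t\,v)) = \exp_{h(p)}(t\,dh_p v)$, so $h$ maps this curve to $t \mapsto \exp_{h(p)}(t\, dh_p n_{\xi,p})$. Thus it suffices to verify that $dh_p n_{\xi,p}$ is an oriented normal of $E_{h_*\xi,h(p)}$, i.e. that $(\nabla\T(h(p)),\, (h_*\xi_1)(h(p)),\, (h_*\xi_2)(h(p)),\, dh_p n_{\xi,p})$ is a positively oriented basis of $T_{h(p)}M$. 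Now $dh_p$ carries the positively oriented basis $(\nabla\T(p), \xi_1(p), \xi_2(p), n_{\xi,p})$ to $(dh_p\nabla\T(p),\, (h_*\xi_1)(h(p)),\, (h_*\xi_2)(h(p)),\, dh_p n_{\xi,p})$, and this is positively oriented because $M$ is oriented and $h$ is orientation-preserving (recall the orientation was fixed and isometries considered here preserve it — if one wishes to be careful, this is the one genuine point to note, since a priori an isometry could reverse orientation; in the oriented/time-oriented setting fixed in Section \ref{sec:geometry} one restricts to the orientation-preserving, time-orientation-preserving subgroup, or else the statement is read modulo replacing $n$ by $-n$, which does not change the resulting wedge since the opposite normal selects the opposite component and that is accounted for by the edge being the same). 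Finally, $dh_p\nabla\T(p)$ and $\nabla\T(h(p))$ both lie in the same (future) causal cone complementary to the spacelike plane spanned by the Killing vectors at $h(p)$, so replacing the former by the latter keeps the basis positively oriented. Hence $dh_p n_{\xi,p}$ is indeed an oriented normal of $E_{h_*\xi,h(p)}$, the image curve probes the distinguished component $W_{h_*\xi,h(p)}$, and therefore $h(W_{\xi,p}) = W_{h_*\xi,h(p)}$.

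\textbf{Main obstacle.} The only subtle point is the orientation bookkeeping in the last step — ensuring that the isometry respects the convention that picks out $W_{\xi,p}$ among the two components of $E_{\xi,p}'$, which forces one to track how $dh_p$ acts on the oriented frame and on the time-orientation. Everything else (preservation of causal complements, intertwining of flows, commuting with $\exp$) is a routine consequence of $h$ being an isometry, and the causal-complement-invariance of $\W$ is immediate from Proposition \ref{prop:wedge-properties}b) and the $\GL(2,\Rl)$-action on $\Xi$.
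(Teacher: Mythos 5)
Your proposal is correct and follows essentially the same route as the paper: the paper's proof consists precisely of the computation $hE_{\xi,p}=E_{h_*\xi,h(p)}$ via conjugation of flows by the isometry, the remark that isometries preserve the causal structure (hence map $E_{\xi,p}'$ onto $E_{h_*\xi,h(p)}'$), and the appeal to Proposition \ref{prop:wedge-properties} b) for closedness under causal complementation; the component-matching step you carry out with the oriented normal and the exponential map is left implicit there. One caveat: your parenthetical fallback for orientation-reversing isometries is garbled --- replacing $n$ by $-n$ \emph{does} change the selected component, and indeed for an isometry reversing the spacetime orientation (e.g.\ the spatial reflection $j_x$ on a Friedmann-Robertson-Walker background used in Section \ref{sec:dirac}, for which $(j_x)_*\zeta=\zeta$ but $j_x$ maps the right wedge to the left one) one finds $h(W_{\xi,p})=W_{h_*\xi,h(p)}'$ rather than $W_{h_*\xi,h(p)}$, so \eqref{h-W} should be read for orientation- and time-orientation-preserving isometries, exactly as your main argument assumes; the invariance of the family $\W$ is unaffected, since the causal complement of a wedge is again a wedge.
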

\begin{proof}
Since isometries preserve the causal structure of a spacetime, we only need to look at the action of  isometries on edges. We find
\begin{equation}
hE_{\xi,p}
=\{h\circ\flow_{\xi,s}\circ h^{-1}(h(p)):s\in\Rl^2\}
=\{\flow_{h_*\xi,s}(h(p)):s\in\Rl^2\}
=E_{h_*\xi,h(p)}
\end{equation}
by using the well-known fact that conjugation of flows by isometries amounts to the push-forward by the isometry of the associated vector field. Since $h_*\xi\in\Xi$ for any $\xi\in\Xi$, $h\in \iso$ (Lemma \ref{lemma:group-xi}), the family $\W$ is invariant under the action of the isometry group. Closedness of $\W$ under causal complementation is clear from Prop. \ref{prop:wedge-properties} b).
\end{proof}

In contrast to the situation in flat spacetime, the isometry group $\iso$ does {\em not} act transitively on $\W(M,g)$ for generic admissible $M$, and there is no isometry mapping a given wedge onto its causal complement. This can be seen explicitly in the examples discussed in Section \ref{sec:examples}. To keep track of this structure of $\W(M,g)$, we decompose $\Xi(M,g)$ into orbits under the $\iso$- and $\GL(2,\Rl)$-actions.

\begin{definition}\label{def:equivalence}
Two Killing pairs $\xi,\xiti\in\Xi$ are equivalent, written $\xi\sim\xiti$, if there exist $h\in\iso$ and $N\in\GL(2,\Rl)$ such that $\xiti = N h_* \xi$.
\end{definition}
As $\xi\mapsto N\xi$ and $\xi\mapsto h_*\xi$ are commuting group actions, $\sim$ is an equivalence relation. According to Lemma \ref{lemma:group-xi} and Prop. \ref{prop:wedge-properties} b), c), acting with $N\in\GL(2,\Rl)$ on $\xi$ either leaves $W_{N\xi,p}=W_{\xi,p}$ invariant (if $\det N>0$) or exchanges this wedge with its causal complement, $W_{N\xi,p}=W_{\xi,p}'$ (if $\det N<0$). Therefore the ``coherent''\footnote{See \cite{BuchholzSummers:2007} for a related notion on Minkowski space.} subfamilies arising in the decomposition of the family of all wedges along the equivalence classes $[\xi]\in\Xi\slash\!\!\sim$,
\begin{align}\label{W-decomposition}
 \W = \bigsqcup_{[\xi]}\W_{[\xi]}\,,
\qquad
\W_{[\xi]}
:=
\{W_{\xiti,p}\,:\,\xiti\sim\xi,\,p\in M\}\,,
\end{align}
take the form
\begin{align}\label{def:Wxi}
\W_{[\xi]}
=
\{W_{h_*\xi,p},\,W_{h_*\xi,p}'\,:\,h\in\iso,\,p\in M\}
\,.
\end{align}
In particular, each subfamily $\W_{[\xi]}$ is invariant under the action of the isometry group and causal complementation.

In our later applications to quantum field theory, it will be important to have control over causal configurations $W_1\subset W_2'$ and inclusions $W_1\subset W_2$ of wedges $W_1,W_2\in\W$. Since $\W$ is closed under taking causal complements, it is sufficient to consider inclusions. Note that the following proposition states in particular that inclusions can only occur between wedges from the same coherent subfamily $\W_{[\xi]}$.

\begin{proposition}{\bf (Inclusions of wedges).}\label{prop:inclusions}\\
Let $\xi,\xiti\in\Xi$, $p,\pti\in M$. The wedges $W_{\xi,p}$ and $W_{\xiti,\pti}$ form an inclusion, $W_{\xi,p}\subset W_{\xiti,\pti}$, if and only if $p\in \overline{W_{\xiti,\pti}}$ and there exists $N\in\GL(2,\Rl)$ with $\det N>0$, such that $\xiti=N\xi$.
\end{proposition}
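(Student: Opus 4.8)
The plan is to prove the two directions of the equivalence separately, working throughout inside the product structure $M\cong\Rl\times I\times E$ guaranteed by admissibility.

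First, the easy direction. Suppose $\xiti=N\xi$ with $\det N>0$ and $p\in\overline{W_{\xiti,\pti}}$. By Lemma \ref{lemma:group-xi} and Proposition \ref{prop:wedge-properties}(b),(c), the condition $\det N>0$ gives $E_{\xi,p}=E_{N\xi,p}=E_{\xiti,p}$ and $W_{\xi,p}=W_{N\xi,p}$; so without loss of generality $\xiti=\xi$. I then want to show $W_{\xi,p}\subset W_{\xi,\pti}$ whenever $p\in\overline{W_{\xi,\pti}}$. The key geometric fact is that the edge $E_{\xi,p}$ through $p$ lies in $\overline{W_{\xi,\pti}}$ (it is the Killing orbit of $p$, and $\overline{W_{\xi,\pti}}$ is $\flow_\xi$-invariant by Proposition \ref{prop:wedge-properties}(c)), and that, because $W_{\xi,\pti}$ is the connected component of $E_{\xi,\pti}'$ selected by the exponential of an oriented normal, the ``coherent'' choice of normal direction is the same on both edges. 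Concretely, I would use the projection $\pi:M\to I$ from the splitting: $W_{\xi,\pti}$ is characterised (as in the proof of Lemma \ref{edgeprop}) as the set of points reachable from $\pti$ by spacelike curves whose $I$-projection moves into a fixed half-line direction, and $E_{\xi,p}\subset\overline{W_{\xi,\pti}}$ pins down that direction. Then any $q\in W_{\xi,p}$ is connected to $p$, hence to $E_{\xi,p}\subset\overline{W_{\xi,\pti}}$, by a spacelike curve with the correct coherent normal orientation, which places $q\in W_{\xi,\pti}$.

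For the converse, assume $W_{\xi,p}\subset W_{\xiti,\pti}$. Taking causal complements and using Proposition \ref{prop:wedge-properties}(b) gives $W_{\xiti',\pti}\subset W_{\xi',p}$, so both a wedge and (after complementation) the reversed inclusion hold. First I extract $p\in\overline{W_{\xiti,\pti}}$: since $W_{\xi,p}$ accumulates at every point of its edge $E_{\xi,p}$ (points of the edge are limits of interior points along the exponentiated normal), and $W_{\xi,p}\subset W_{\xiti,\pti}$, we get $E_{\xi,p}\subset\overline{W_{\xiti,\pti}}$, in particular $p\in\overline{W_{\xiti,\pti}}$. The substantive part is showing the two edges are related by a $\GL(2,\Rl)$-transformation, i.e. that $E_{\xi,p}$ and $E_{\xiti,\pti}$ have the same tangent 2-plane distribution along a common point — equivalently that $\lspan\{\xi_1,\xi_2\}=\lspan\{\xiti_1,\xiti_2\}$ at points of $E_{\xi,p}$, which (since both are Killing pairs generating the same kind of product splitting) forces $\xiti=N\xi$ globally. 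Here the idea is: if the edges were genuinely transverse, then $E_{\xiti,\pti}$ would have points both in $W_{\xi,p}$ and in $W_{\xi,p}'=W_{\xi',p}$ (an edge transverse to $E_{\xi,p}$ must ``cross'' from one component of $E_{\xi,p}'$ to the other, using the topological argument with the projection $\pi$ onto the simply connected interval $I$ exactly as in Lemma \ref{edgeprop}); but $E_{\xiti,\pti}\subset\overline{W_{\xiti,\pti}}$ is contained in $\overline{W_{\xi,p}}$-side... more precisely $E_{\xiti,\pti}$ cannot meet the open set $W_{\xi',p}=W_{\xi,p}'$ because $W_{\xi,p}\subset W_{\xiti,\pti}$ implies $W_{\xiti,\pti}'\subset W_{\xi,p}'$, and $E_{\xiti,\pti}\subset\overline{W_{\xiti,\pti}}\cap\overline{W_{\xiti,\pti}'}$, forcing $E_{\xiti,\pti}\cap W_{\xi,p}'=\emptyset$. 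So $E_{\xiti,\pti}$ stays on one side, which is only possible if it is tangent to the $\xi$-distribution, giving $\xiti=N\xi$; then $\det N=\pm1$ is not automatic here, but $\det N>0$ follows because $\det N<0$ would swap $W_{\xi,p}$ with $W_{\xi',p}=W_{\xi,p}'$, contradicting $W_{\xi,p}\subset W_{\xiti,\pti}$ together with $p\in\overline{W_{\xiti,\pti}}$ (one checks a wedge is never contained in a wedge based at a point of its own closure unless the normals cohere).

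The main obstacle I anticipate is the transversality/crossing argument in the converse: making rigorous the claim that an edge $E_{\xiti,\pti}$ which is not everywhere tangent to the $\lspan\{\xi_1,\xi_2\}$ distribution must have points in \emph{both} connected components of $E_{\xi,p}'$. This requires carefully tracking the projection to the interval factor $I$ — showing the $I$-coordinate is non-constant along $E_{\xiti,\pti}$ when the edges are transverse, and that it must then take values on both sides of $\pi(E_{\xi,p})$ — and handling the fact that the two product splittings $M\cong\Rl\times I\times E$ associated with $\xi$ and $\xiti$ need not coincide a priori. A clean way around part of this is to first reduce to the two-dimensional base $(\Rl\times I,\iota^*g)$, which has no null focal points, and argue there; but the reduction itself needs the tangency statement, so some genuinely geometric input (rather than pure point-set topology) seems unavoidable at this step.
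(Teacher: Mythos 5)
Your ($\Leftarrow$) direction is broadly in the spirit of the paper's argument (the paper joins spacelike curves satisfying the characterization \eqref{wedgechar} at a common oriented normal of $E_{\xi,p}$), but note that it treats separately the case where $p$ lies on the boundary $\partial\overline{W_{\xiti,\pti}}$, i.e.\ where the two edges are connected by null geodesics; your sketch, which relies on the edge $E_{\xi,p}$ lying in $\overline{W_{\xi,\pti}}$ and a ``coherent half-line direction'' of the $I$-projection, does not address this null case and leaves the coherence claim unproved.

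The genuine gap is in your ($\Rightarrow$) direction. The step ``$E_{\xiti,\pti}$ cannot meet the open set $W_{\xi,p}'$'' is false, and the implication you invoke for it ($W_{\xiti,\pti}'\subset W_{\xi,p}'$ together with $E_{\xiti,\pti}\subset\overline{W_{\xiti,\pti}}\cap\overline{W_{\xiti,\pti}'}$) does not yield it. Already in Minkowski space, take $W_{\xi,p}$ the right wedge with edge through the origin and $W_{\xiti,\pti}$ the translated right wedge with edge at $x_1=-1$: then $W_{\xi,p}\subset W_{\xiti,\pti}$, yet the entire edge $E_{\xiti,\pti}$ lies inside the open left wedge $W_{\xi,p}'$. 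What is true is the opposite exclusion, $E_{\xiti,\pti}\cap W_{\xi,p}=\emptyset$ (since $W_{\xi,p}\subset W_{\xiti,\pti}\subset E_{\xiti,\pti}'$), so your intended contradiction would have to run through the claim that a transverse edge meets \emph{both} components of $E_{\xi,p}'$ --- precisely the ``crossing'' statement you yourself flag as the unresolved obstacle. As it stands, the tangency of the two Killing distributions, the passage from pointwise tangency to $\xiti=N\xi$ with a constant matrix $N$, and the sign argument (your parenthetical ``one checks a wedge is never contained in a wedge based at a point of its own closure unless the normals cohere'') are all asserted rather than proved. The paper avoids all of this: it shows that if three of the four Killing fields $\xi_1,\xi_2,\xiti_1,\xiti_2$ were linearly independent, then in the global coordinates built from their flow parameters and the temporal function one can flow a point of $E_{\xi,p}$ and a point of $E_{\xiti,\pti}$ to points differing only in the time coordinate, producing a timelike curve between the two edges, which is incompatible with $W_{\xi,p}\subset W_{\xiti,\pti}$; hence $\xiti=N\xi$, and $\det N>0$ follows because $\det N<0$ would, by the already established ($\Leftarrow$) direction applied to $\xi'$, put both $W_{\xi,p}$ and its causal complement inside $W_{\xiti,\pti}$, a contradiction. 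You would need either to adopt that argument or to supply a complete proof of the crossing statement (and of the globalization $\xiti=N\xi$) before your route can be considered a proof.
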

\begin{proof}
($\Leftarrow$) Let us assume that $\xiti=N\xi$ holds for some $N\in\GL(2,\Rl)$ with $\det N>0$,  and $p\in \overline{W_{\xiti,\pti}}$. In this case, the Killing fields in $\xiti$ are linear combinations of those in $\xi$, and consequently, the edges $E_{\xi,p}$ and $E_{\xiti,\pti}$ intersect if and only if they coincide, {\it i.e.} if $\pti\in E_{\xi,p}$. If the edges coincide, we clearly have $W_{\xiti,\pti}=W_{\xi,p}$. If they do not coincide, it follows from $p\in \overline{W_{\xiti,\pti}}$ that $E_{\xi,p}$ and $E_{\xiti,\pti}$ are either spacelike separated or they can be connected by a null geodesic.

Consider now the case that $E_{\xi,p}$ and $E_{\xiti,\pti}$ are spacelike separated, {\it i.e.} $p\in W_{\xiti,\pti}$. Pick a point $q\in W_{\xi,p}$ and recall that $W_{\xi,p}$ can be characterized by equation (\ref{wedgechar}). Since  $p\in W_{\xiti,\pti}$ and $q\in W_{\xi,p}$, there exist curves $\gamma_p$ and $\gamma_q$, which connect the pairs of points $(\pti,p)$ and $(p,q)$, respectively, and comply with the conditions in (\ref{wedgechar}). By joining $\gamma_{p}$ and $\gamma_q$ we obtain a curve which connects $\pti$ and $q$. The tangent vectors $\dot{\gamma}_p(1)$ and $\dot{\gamma}_q(0)$ are oriented normals of $E_{\xi, p}$ and we choose $\gamma_p$ and $\gamma_q$ in such a way that these tangent vectors coincide. Due to the properties of $\gamma_p$ and $\gamma_q$, the joint curve also complies with the conditions in (\ref{wedgechar}), from which we conclude $q\in W_{\xiti,\pti}$, and thus $W_{\xi,p}\subset W_{\xiti,\pti}$.

Consider now the case that $E_{\xiti,\pti}$ and $E_{\xi,p}$ are connected by null geodesics, {\it i.e.} $p\in \partial\overline{W_{\xiti,\pti}}$. Let $r$ be the point in $E_{\xi,p}$ which can be connected by a null geodesic with $\pti$ and pick a point $q\in W_{\xi,p}$. The intersection $J^-(r)\cap \partial\overline{W_{\xi,p}}$ yields another null curve, say $\mu$, and the intersection $\mu\cap J^-(q)=:p'$ is non-empty since $r$ and $q$ are spacelike separated and $q\in W_{\xi,p}$. The null curve $\mu$ is chosen future directed and parametrized in such a way that $\mu(0)=p'$ and $\mu(1)=r$. By taking $\varepsilon\in(0,1)$ we find $q\in W_{\xi,\mu(\varepsilon)}$ and $\mu(\varepsilon)\in W_{\xiti,\pti}$ which entails $q\in W_{\xiti,\pti}$.

($\Rightarrow$) Let us assume that we have an inclusion of wedges $W_{\xi,p} \subset W_{\xiti,\pti}$. Then clearly $p\in\overline{W_{\xiti,\pti}}$. Since $M$ is four-dimensional and $\xi_1,\xi_2,\xiti_1,\xiti_2$ are all spacelike, they cannot be linearly independent. Let us first assume that three of them are linearly independent, and without loss of generality, let $\xi=(\xi_1,\xi_2)$ and $\xiti=(\xi_2, \xi_3)$ with three linearly independent spacelike Killing fields $\xi_1,\xi_2,\xi_3$. Picking points $q\in E_{\xi,p}$, $\tilde{q}\in E_{\xiti,\pti}$ these can be written as $q=(t,x_1,x_2,x_3)$ and $\tilde{q}=(\tilde{t},\tilde{x}_1,\tilde{x}_2,\tilde{x}_3)$ in the global coordinate system of flow parameters constructed from $\xi_1,\xi_2,\xi_3$ and the gradient of the temporal function.

For suitable flow parameters $s_1,s_2,s_3$, we have $\flow_{\xi_1,s_1}(q)=(t,\tilde{x}_1,x_2,x_3)=:q'\in E_{\xi,p}$ and $\flow_{(\xi_2,\xi_3),(s_2,s_3)}(\tilde{q})=(\tilde{t},\tilde{x}_1,x_2,x_3)=:\tilde{q}'\in E_{\xiti,\pti}$. Clearly, the points $q'$ and $\tilde{q}'$ are  connected by a timelike curve, {\it e.g.} the curve whose tangent vector field is given by the gradient of the temporal function. But a timelike curve connecting the edges of $W_{\xi,p}, W_{\xiti,\pti}$ is a contradiction to these wedges forming an inclusion. So no three of the vector fields $\xi_1,\xi_2,\xiti_1,\xiti_2$ can be linearly independent.

Hence $\xiti= N\xi$ with an invertible matrix $N$. It remains to establish the correct sign of $\det N$, and to this end, we assume $\det N<0$. Then we have $(W_{\xi,p})'=W_{\xi',p}\subset W_{\xiti,\pti}$, by  (Prop. \ref{prop:wedge-properties} b)) and the ($\Leftarrow$) statement in this proof, since $\xiti$ and $\xi'$ are related by a positive determinant transformation and $p\in \overline{W_{\xiti,\pti}}$. This yields that both, $W_{\xi,p}$ and its causal complement, must be contained in $W_{\xiti,\pti}$, a contradiction. Hence $\det N>0$, and the proof is finished.
\end{proof}

Having derived the structural properties of the set $\W$ of wedges needed later, we now compare our wedge regions to the Minkowski wedges and to other definitions proposed in the literature.

The flat Minkowski spacetime $(\Rl^4,\eta)$ clearly belongs to the class of admissible spacetimes, with translations along spacelike directions and rotations in the standard time zero Cauchy surface as its complete spacelike Killing fields. However, as Killing pairs consist of non-vanishing vector fields, and each rotation leaves its rotation axis invariant, the set $\Xi(\Rl^4,\eta)$ consists precisely of all pairs $(\xi_1,\xi_2)$ such that the flows $\flow_{\xi_1}$, $\flow_{\xi_2}$ are translations along two linearly independent spacelike directions. Hence the set of all edges in Minkowski space coincides with the set of all two-dimensional spacelike planes. Consequently, each wedge $W\in\W(\Rl^4,\eta)$ is bounded by two non-parallel characteristic three-dimensional planes. This is precisely the family of wedges usually considered in Minkowski space\footnote{Note that we would get a ``too large'' family of wedges in Minkowski space if we would drop the requirement that the vector fields generating edges are Killing. However, the assumption that edges are generated by {\em commuting} Killing fields is motivated by the application to deformations of quantum field theories, and one could generalize our framework to spacetimes with edges generated by complete, linearly independent smooth Killing fields.} (see, for example, \cite{ThomasWichmann:1997}).

Besides the features we established above in the general admissible setting, the family of Minkowski wedges has the following well-known properties:

\begin{enumerate}
 \item Each wedge $W\in\W(\Rl^4,\eta)$ is the causal completion of the world line of a uniformly accelerated observer.
 \item Each wedge $W\in\W(\Rl^4,\eta)$ is the union of a family of double cones whose tips lie on two fixed lightrays.
\item The isometry group (the Poincar\'e group) acts transitively on $\W(\Rl^4,\eta)$.
 \item $\W(\Rl^4,\eta)$ is {\em causally separating}\label{causal-sep} in the sense that given
any two spacelike separated double cones $\OO_1,\OO_2\subset\Rl^4$, then there exists a wedge $W$ such that $\OO_1\subset
W\subset\OO_2'$ \cite{ThomasWichmann:1997}. $\W(\Rl^4,\eta)$ is a subbase for the topology of $\Rl^4$.
\end{enumerate}

All these properties a)--d) do {\em not} hold for the class $\W(M,g)$ of wedges on a {\em general} admissible spacetime, but some hold for certain subclasses, as can be seen from the explicit examples in the subsequent section.

There exist a number of different constructions for wedges in curved spacetimes in the literature, mostly for special manifolds. On de Sitter respectively anti de Sitter space Borchers and Buchholz \cite{BorchersBuchholz:1999} respectively Buchholz and Summers \cite{BuchholzSummers:2004-2} construct wedges by taking property a) as their defining feature, see also the generalization by Strich \cite{Strich:2008}. In the de Sitter case, this definition is equivalent to our definition of a wedge as a connected component of the causal complement of an edge \cite{BuchholzMundSummers:2001}. But as two-dimensional spheres, the de Sitter edges do not admit two linearly independent commuting Killing fields. Apart from this difference due to our restriction to commuting, linearly independent, Killing fields, the de Sitter wedges can be constructed  in the same way as presented here. Thanks to the maximal symmetry of the de Sitter and anti de Sitter spacetimes, the respective isometry groups act transitively on the corresponding wedge families (c), and causally separate in the sense of d).

A definition related to the previous examples has been given by Lauridsen-Ribeiro for wedges in asymptotically anti de Sitter spacetimes (see Def. 1.5 in \cite{LauridsenRibeiro:2007}). Note that these spacetimes are not admissible in our sense since anti de Sitter space is not globally hyperbolic.

Property b) has recently been taken by Borchers \cite{Borchers:2009} as a definition of wedges in a quite general class of curved spacetimes which is closely related to the structure of double cones. In that setting, wedges do not exhibit in general all of the features we derived in our framework, and can for example have compact closure.

Wedges in a class of Friedmann-Robertson-Walker spacetimes with spherical spatial sections have been constructed with the help of conformal embeddings into de Sitter space \cite{BuchholzMundSummers:2001}. This construction also yields wedges defined as connected components of causal complements of edges. Here a) does not, but c) and d) do hold, see also our discussion of Friedmann-Robertson-Walker spacetimes with flat spatial sections in the next section.

The idea of constructing wedges as connected components of causal complements of specific two-dimensional submanifolds has also been used in the context of globally hyperbolic spacetimes with a bifurcate Killing horizon \cite{GuidoLongoRobertsVerch:2001}, building on earlier work in \cite{Kay:1985}. Here the edge is given as the fixed point manifold of the Killing flow associated with the horizon.

 \subsection{Concrete examples}\label{sec:examples}

In the previous section we provided a complete but abstract characterization of the geometric structures of the class of spacetimes we are interested in. This analysis is now complemented by presenting a number of explicit examples of admissible spacetimes.

The easiest way to construct an admissble spacetime is to take the warped product \cite[Chap. 7]{O'Neill:1983} of an edge with another manifold. Let $(E,g_E)$ be a two-dimensional Riemannian manifold endowed with two complete, commuting, linearly independent, smooth Killing fields, and let $(X,g_X)$ be a two-dimensional, globally hyperbolic spacetime diffeomorphic to $\Rl\times I$, with $I$ an open interval or the full real line. Then, given a positive smooth function $f$ on $X$, consider the {\em warped product}
$M:=X\times_f E$, {\em i.e.}, the product manifold $X\times E$ endowed with the metric tensor field
\begin{align*}
g:=\pi_X^*(g_X)+(f\circ\pi_X) \cdot \pi_E^*(g_E),
\end{align*}
where $\pi_X:M\to X$ and $\pi_E:M\to E$ are the projections on $X$ and $E$. It readily follows that $(M,g)$ is admissible in the sense of Definition \ref{admissible}.
\\\\
The following proposition describes an explicit class of admissible spacetimes in terms of their metrics.
\begin{proposition}\label{prop:metric}
Let $(M,g)$ be a spacetime diffeomorphic to $\Rl\times I\times \Rl^2$, where $I\subseteq\Rl$ is open and simply connected, endowed with a global coordinate system $(t,x,y,z)$ according to which the metric reads
\begin{equation}\label{metric}
 ds^2=e^{2f_0}dt^2-e^{2f_1}dx^2-e^{2f_2}dy^2-e^{2f_3}(dz-q\,dy)^2.
\end{equation}
Here $t$ runs over the whole $\Rl$, $f_i,q\in C^\infty(M)$ for $i=0,...,3$ and $f_i,q$ do not depend on $y$ and $z$. Then $(M,g)$ is an admissible spacetime in the sense of Definition \ref{admissible}.
\end{proposition}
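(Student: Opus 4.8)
The plan is to verify, clause by clause, that the explicit metric \eqref{metric} meets the requirements of Definition \ref{admissible}; almost everything reduces to a direct computation in the given coordinates, the one genuinely delicate point being the Cauchy surface property. For the Killing pair take $\xi_1:=\partial_y$ and $\xi_2:=\partial_z$. These coordinate fields are smooth, commute, and are complete---their integral curves being the coordinate lines, along which $y$, resp.\ $z$, runs through all of $\Rl$---and since no coefficient in \eqref{metric} depends on $y$ or $z$ we have $\mathcal{L}_{\partial_y}g=\mathcal{L}_{\partial_z}g=0$, so both are Killing. To see that they are spacelike and pointwise linearly independent I would record the Gram matrix
\begin{align*}
 \begin{pmatrix}g(\partial_y,\partial_y)&g(\partial_y,\partial_z)\\ g(\partial_z,\partial_y)&g(\partial_z,\partial_z)\end{pmatrix}
 =\begin{pmatrix}-(e^{2f_2}+q^2e^{2f_3})&q\,e^{2f_3}\\ q\,e^{2f_3}&-e^{2f_3}\end{pmatrix},
\end{align*}
whose determinant is $e^{2(f_2+f_3)}>0$ and whose trace is negative, so it is negative definite; hence $\partial_y,\partial_z$ span a spacelike two-plane everywhere and are in particular linearly independent. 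For $\xi=(\partial_y,\partial_z)$ the orbit \eqref{edge1} of a point $p=(t_0,x_0,y_0,z_0)$ is the coordinate plane $E=\{(t_0,x_0)\}\times\Rl^2$, a smooth two-dimensional spacelike embedded submanifold, and the diffeomorphism of the hypothesis reads $M\cong\Rl_t\times I_x\times E$.

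Next I would supply the embedding $\iota\colon\Rl\times I\to M$, $\iota(t,x):=(t,x,0,0)$; since $f_0,f_1$ do not depend on $y,z$ this yields $\iota^*g=e^{2f_0}dt^2-e^{2f_1}dx^2$, a Lorentzian metric on $\Rl\times I$. The point is that $\iota$ is causally compatible: a causal curve of $(\Rl\times I,\iota^*g)$ lifts, with $y,z$ held constant, to a causal curve of $(M,g)$ of the same causal type, while conversely the coordinate projection $(t,x,y,z)\mapsto(t,x)$ sends $g$-causal curves to $\iota^*g$-causal ones, since
\begin{align*}
 e^{2f_0}\dot t^2-e^{2f_1}\dot x^2=g(\dot\gamma,\dot\gamma)+e^{2f_2}\dot y^2+e^{2f_3}(\dot z-q\,\dot y)^2\ \ge\ g(\dot\gamma,\dot\gamma).
\end{align*}
Hence $J^+_{\Rl\times I}(\bar a)\cap J^-_{\Rl\times I}(\bar b)=\iota^{-1}\!\big(J^+_M(\iota\bar a)\cap J^-_M(\iota\bar b)\big)$, a closed subset of the compact set obtained by projecting $J^+_M(\iota\bar a)\cap J^-_M(\iota\bar b)$ to the first two coordinates, hence itself compact; together with the fact that $t$ is a time function on $\Rl\times I$ (its conormal $dt$ is timelike, $(\iota^*g)^{tt}=e^{-2f_0}>0$), this shows $(\Rl\times I,\iota^*g)$ is globally hyperbolic. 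It is automatically free of null focal points, since the screen space transverse to a null geodesic in a two-dimensional spacetime is zero-dimensional.

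It remains to check that $\Sigma:=\{t=0\}\cong I\times E$ is a smooth spacelike Cauchy surface of $M$. It is a smooth hypersurface with timelike conormal $dt$ ($g^{tt}=e^{-2f_0}>0$), hence spacelike, and the metric splits as $g=e^{2f_0}dt^2-h$ with $h$ restricting to a Riemannian metric on each slice. Along any $g$-causal curve $\gamma$ one has $\dot t\neq0$, for $\dot t=0$ would give $g(\dot\gamma,\dot\gamma)=-e^{2f_1}\dot x^2-e^{2f_2}\dot y^2-e^{2f_3}(\dot z-q\,\dot y)^2\le0$, with equality only when $\dot\gamma=0$; thus $t\circ\gamma$ is strictly monotone, so $\Sigma$ is acausal and is met by each causal curve at most once. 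The one step requiring genuine care is to show that $t\circ\gamma$ is onto $\Rl$ for every \emph{inextendible} $g$-causal curve $\gamma$, so that $\gamma$ actually crosses $\Sigma$: here I would reparametrise $\gamma$ by $t$, bound the $(y,z)$-velocities in terms of $\dot t$ on compact regions by means of the causal inequality above, conclude that the projection of $\gamma$ to $(\Rl\times I,\iota^*g)$ remains inextendible, and then invoke global hyperbolicity of $M$ (equivalently, of its two-dimensional reduction), which forbids inextendible causal curves of bounded $t$-range. Once $\Sigma$ is established as a Cauchy surface, all clauses of Definition \ref{admissible} hold and $(M,g)$ is admissible. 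I expect this last point---excluding ``trapped'' inextendible causal curves with bounded $t$-coordinate---to be the main obstacle; everything else is bookkeeping with the explicit metric.
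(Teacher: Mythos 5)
Your verification of the Killing-pair clauses is correct and in places more careful than the paper's own proof: the Gram matrix computation showing $\partial_y,\partial_z$ are spacelike and pointwise independent, their completeness as coordinate translations, the computation $\iota^*g=e^{2f_0}dt^2-e^{2f_1}dx^2$, and the causal inequality comparing causal curves of $M$ with those of the two-dimensional reduction are all fine. (The paper proceeds in the opposite order: it first checks that the induced metric on the constant-$t$ hypersurfaces is Riemannian and deduces global hyperbolicity from \cite{BernalSanchez:2005} together with a completeness argument via Hopf--Rinow, and only then treats the Killing fields, whose completeness it gets from completeness of the slices rather than directly.)

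The genuine gap is that the property you yourself single out as decisive --- global hyperbolicity of $(M,g)$, equivalently that every inextendible causal curve meets $\Sigma=\{t=0\}$ --- is never actually proved: your argument for it is circular. Global hyperbolicity is part of what admissibility demands (Definition \ref{admissible} together with the standing assumptions of Section \ref{sec:edges+wedges}) and is not a hypothesis of Proposition \ref{prop:metric}; the paper's proof derives it. Yet you use compactness of $J^+_M\cap J^-_M$ to establish global hyperbolicity of $(\Rl\times I,\iota^*g)$, and in the crucial final step you ``invoke global hyperbolicity of $M$ (equivalently, of its two-dimensional reduction)'' to exclude inextendible causal curves of bounded $t$-range --- which is precisely the statement to be established. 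Even if global hyperbolicity were granted, ``no inextendible causal curve has bounded $t$-range'' is not automatic for an arbitrary coordinate time; it is equivalent to the constant-$t$ slices being Cauchy, again the thing being proved. Nor can the step be closed by bookkeeping with the causal inequality alone: taking $f_0=f_2=f_3=q=0$ and $f_1=f_1(x)$ with $\int_{\Rl} e^{f_1}\,dx<\infty$ turns \eqref{metric} into a slab $\Rl\times(a,b)\times\Rl^2$ of Minkowski space, where inextendible null curves have bounded $t$-range and the constant-$t$ surfaces are not Cauchy. So the crossing property requires a genuine completeness-type input on the slices --- the ingredient the paper's proof injects through the Riemannian character of the constant-$t$ hypersurfaces, \cite{BernalSanchez:2005} and Hopf--Rinow --- and your sketch replaces exactly this input by an appeal to the conclusion.
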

\begin{proof}
Per direct inspection of \eqref{metric}, $M$ is isometric to $\Rl\times\Sigma$ with $\Sigma\cong I\times\Rl^2$ with $ds^2=\beta\, dt^2-h_{ij}dx^i dx^j$, where $\beta$ is smooth and positive, and $h$ is a metric which depends smoothly on $t$. Furthermore, on the hypersurfaces at constant $t$, $\det h=e^{2(f_1+f_2+f_3)}>0$ and $h$ is block-diagonal. If we consider the sub-matrix with $i,j=y,z$, this has a positive determinant and a positive trace. Hence we can conclude that the induced metric on $\Sigma$ is Riemannian, or, in other words, $\Sigma$ is a spacelike, smooth, three-dimensional Riemannian hypersurface. Therefore we can apply Theorem 1.1 in \cite{BernalSanchez:2005} to conclude that $M$ is globally hyperbolic.

Since the metric coefficients are independent from $y$ and $z$, the vector fields $\xi_1=\partial_y$ and $\xi_2=\partial_z$ are smooth Killing fields which commute and, as they lie tangent to the Riemannian hypersurfaces at constant time, they are also spacelike. Furthermore, since per definition of spacetime, $M$ and thus also $\Sigma$ is connected, we can invoke the Hopf-Rinow-Theorem \cite[$\S$ 5, Thm. 21]{O'Neill:1983} to conclude that $\Sigma$ is complete and, thus, all its Killing fields are complete. As $I$ is simply connected by assumption, it follows that $(M,g)$ is admissible.
\end{proof}

Under an additional assumption, also a partial converse of Proposition \ref{prop:metric} is true. Namely, let $(M,g)$ be a globally hyperbolic spacetime with two complete, spacelike, commuting, smooth Killing fields, and pick a local coordinate system $(t,x,y,z)$, where $y$ and $z$ are the flow parameters of the Killing fields. Then, if the reflection map $r:M\to M$, $r(t,x,y,z)=(t,x,-y,-z)$, is an isometry, the metric is locally of the form \eqref{metric}, as was proven in \cite{Chandrasekhar:1983, ChandrasekharFerrari:1984}. The reflection $r$ is used to guarantee the vanishing of the unwanted off-diagonal metric coefficients, namely those associated to ``$dx\,dy$" and ``$dx\,dz$". Notice that the cited papers allow only to establish a result on the local structure of $M$ and no a priori condition is imposed on the topology of $I$, in distinction to Proposition \ref{prop:metric}.
\\\\
Some of the metrics \eqref{metric} are used in cosmology. For the description of a spatially homogeneous but in general anisotropic universe $M\cong J\times\Rl^3$ where $J\subseteq\Rl$ (see \S 5 in \cite{Wald:1984} and \cite{FullingParkerHu:1974}), one puts $f_0=q=0$ in \eqref{metric} and takes $f_1,f_2,f_3$ to depend only on $t$. This yields the metric of {\em Kasner spacetimes} respectively {\em Bianchi I models}\footnote{
The Bianchi models I--IX \cite{Ellis:2006} arise from the classification of three-dimensional real Lie algebras, thought of as Lie subalgebras of the Lie algebra of Killing fields. Only the cases  Bianchi I--VII, in which the three-dimensional Lie algebra contains $\Rl^2$ as a subalgebra, are of direct interest here, since only in these cases Killing pairs exist.}
 \begin{equation}\label{Kasner}
 ds^2=dt^2-e^{2f_1}dx^2-e^{2f_2}dy^2-e^{2f_3}dz^2\,.
 \end{equation}
Clearly here the isometry group contains three smooth Killing fields, locally given by $\partial_x, \partial_y, \partial_z$, which are everywhere linearly independent, complete and commuting. In particular,   $(\partial_x,\partial_y)$, $(\partial_x,\partial_z)$ and $(\partial_y,\partial_z)$ are Killing pairs.
\\\\
A case of great physical relevance arises when specializing the metric further by taking all the functions $f_i$ in \eqref{Kasner} to coincide. In this case, the metric assumes the so-called  {\em Friedmann-Robertson-Walker} form
 \begin{equation}\label{FRW}
 ds^2
= dt^2-a(t)^2\,[dx^2+dy^2+dz^2]
=a(t(\tau))^2\,\left[d\tau^2-dx^2-dy^2-dz^2\right]
\,.
 \end{equation}
Here the {\em scale factor} $a(t) := e^{f_1(t)}$ is defined on some interval $J\subseteq\Rl$, and in the second equality, we have introduced the {\em conformal time} $\tau$, which is implicitely defined by $d\tau = a^{-1}(t)dt$. Notice that, as in the Bianchi I model, the manifold is $M\cong J\times\Rl^3$, {\em i.e.}, the variable $t$ does not need to range over the whole real axis. (This does not affect the property of global hyperbolicity.)

By inspection of \eqref{FRW}, it is clear that the isometry group of this spacetime contains the three-dimensional Euclidean group ${\rm E}(3) = {\rm O}(3)\rtimes\Rl^3$. Disregarding the Minkowski case, where $J=\Rl$ and $a$ is constant, the isometry group in fact coincides with ${\rm E}(3)$. Edges in such a Friedmann-Robertson-Walker universe are of the form $\{\tau\}\times S$, where $S$ is a two-dimensional plane in $\Rl^3$ and $t(\tau)\in J$. Here $\W$ consists of a single coherent family, and the $\iso$-orbits in $\W$ are labelled by the time parameter $\tau$ for the corresponding edges. Also note that the family of Friedmann-Robertson-Walker wedges is causally separating in the sense discussed on page \pageref{causal-sep}.

The second form of the metric in \eqref{FRW} is manifestly a conformal rescaling of the flat Minkowski metric. Interpreting the coordinates $(\tau,x,y,z)$ as coordinates of a point in $\Rl^4$ therefore gives rise to a conformal embedding $\iota:M\to\Rl^4$.

\begin{wrapfigure}{r}{60mm}
 \centering
{\includegraphics{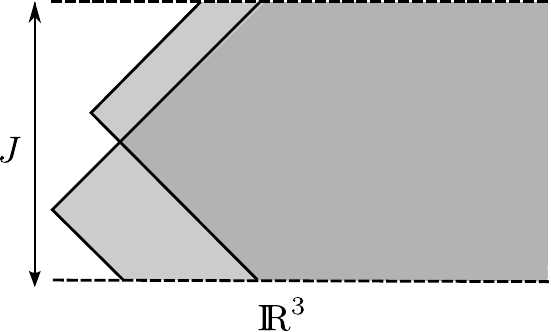}}\\
{\em Two wedges in FRW spacetime}
\end{wrapfigure}
\noindent
In this situation, it is interesting to note that the set of all images $\iota(E)$ of edges $E$ in the Friedmann-Robertson-Walker spacetime coincides with the set of all Minkowski space edges which lie completely in $\iota(M)=J\times\Rl^3$, provided that $J$ does not coincide with $\Rl$. These are just the edges parallel to the standard Cauchy surfaces of constant $\tau$ in $\Rl^4$.
So Friedmann-Robertson-Walker edges can also be characterized in terms of Minkowski space edges and the conformal embedding $\iota$, analogous to the construction of wedges in Friedmann-Robertson-Walker spacetimes with spherical spatial sections in \cite{BuchholzMundSummers:2001}.


\section{Quantum field theories on admissible spacetimes}\label{sec:deformation}

Having discussed the relevant geometric structures, we now fix an admissible spacetime $(M,g)$ and discuss warped convolution deformations of quantum field theories on it. For models on flat Minkowski space, it is known that this deformation procedure weakens point-like localization to localization in wedges \cite{BuchholzLechnerSummers:2010}, and we will show here that the same holds true for admissible curved spacetimes. For a convenient description of this weakened form of localization, and for a straightforward application of the warped convolution technique, we will work in the framework of local quantum physics \cite{Haag:1996}.

In this setting, a model theory is defined by a net of field algebras, and here we consider algebras $\fF(W)$ of quantum fields supported in wedges $W\in\W(M,g)$. The main idea underlying the deformation is to apply the formalism developed in \cite{BuchholzSummers:2008,BuchholzLechnerSummers:2010}, but with the global translation symmetries of Minkowski space replaced by the Killing flow $\flow_\xi$ corresponding to the wedge $W=W_{\xi,p}$ under consideration. In the case of Minkowski spacetime, these deformations reduce to the familiar structure of a noncommutative Minkowski space with commuting time.

The details of the model under consideration will not be important in Section \ref{sec:generaldeformations}, since our construction relies only on a few structural properties satisfied in any well-behaved quantum field theory. In Section \ref{sec:dirac}, the deformed Dirac quantum field  is presented as a particular example.

\subsection{Deformations of nets with Killing symmetries}\label{sec:generaldeformations}

Proceeding to the standard mathematical formalism \cite{Haag:1996,Araki:1999}, we consider a $C^*$-algebra $\fF$, whose elements are interpreted as (bounded functions of) quantum fields on the spacetime $M$. The field algebra $\fF$ has a local structure, and in the present context, we focus on localization in wedges $W\in\W$, since this form of localization turns out to be stable under the deformation. Therefore, corresponding to each wedge $W\in\W$, we consider the $C^*$-subalgebra $\fF(W)\subset\fF$ of fields supported in $W$. Furthermore, we assume a strongly continuous action $\alpha$ of the isometry group $\iso$ of $(M,g)$ on $\fF$, and a Bose/Fermi automorphism $\gamma$ whose square is the identity automorphism, and which commutes with $\alpha$. This automorphism will be used to separate the Bose/Fermi parts of fields $F\in\fF$; in the model theory of a free Dirac field discussed later, it can be chosen as a rotation by $2\pi$ in the Dirac bundle.

To allow for a straightforward application of the results of \cite{BuchholzLechnerSummers:2010}, we will also assume in the following that the field algebra is concretely realized on a separable Hilbert space $\Hil$, which carries a unitary representation $U$ of $\iso$ implementing the action $\alpha$, {\em i.e.},
\begin{align*}
U(h)FU(h)^{-1} = \alpha_h(F)
\,,\qquad
h\in\iso,\,F\in\fF\,.
\end{align*}
We emphasize that despite working on a Hilbert space, we do not select a state, since we do not make any assumptions regarding $U$-invariant vectors in $\Hil$ or the spectrum of subgroups of the representation $U$.\footnote{Note that every $C^*$-dynamical system $(\A,G,\alpha)$, where $\A\subset \mathcal{B(H)}$ is a concrete $C^*$-algebra on a separable Hilbert space $\mathcal{H}$ and $\alpha:G\to \Aut(\A)$ is a strongly continuous representation of the locally compact group $G$, has a covariant representation \cite[Prop. 7.4.7, Lemma 7.4.9]{Pedersen:1979}, build out of the left-regular representation on the Hilbert space $L^2(G)\otimes \mathcal{H}$.} The subsequent analysis will be carried out in a $C^*$-setting, without using the weak closures of the field algebras $\fF(W)$ in $\B(\Hil)$.

For convenience, we also require the Bose/Fermi automorphism $\gamma$ to be unitarily implemented on $\Hil$, {\em i.e.}, there exists a unitary $V=V^*=V^{-1}\in\B(\Hil)$ such that $\gamma(F)=VFV$. We will also use the associated unitary twist operator
\begin{align}\label{def:Z}
 Z := \frac{1}{\sqrt{2}}(1-iV)
\,.
\end{align}
Clearly, the unitarily implemented $\alpha$ and $\gamma$ can be continued to all of $\B(\Hil)$. By a slight abuse of notation, these extensions will be denoted by the same symbols.

In terms of the data $\{\fF(W)\}_{W\in\W}, \alpha, \gamma$, the structural properties of a quantum field theory on $M$ can be summarized as follows \cite{Haag:1996,Araki:1999}.

\begin{enumerate}
 \item {\em Isotony:} $\fF(W)\subset\fF(\tilde{W})$ whenever $W\subset\tilde{W}$.
 \item {\em Covariance} under $\iso$:
\begin{align}
 \alpha_h(\fF(W)) = \fF(h W)\,,\qquad h\in \iso,\;W\in\W\,.\label{covariance}
\end{align}
\item {\em Twisted Locality:} With the unitary $Z$ \eqref{def:Z},  there holds
\begin{align}\label{twisted-locality}
[ZFZ^*,\,G]=0\qquad  \text{for }\;\,F\in\fF(W), G\in\fF(W'),\;W\in\W\,.
\end{align}
\end{enumerate}

The twisted locality condition \eqref{twisted-locality} is equivalent to normal commutation relations between the Bose/Fermi parts $F_\pm:=\frac{1}{2}(F\pm \gamma(F))$ of fields in spacelike separated wedges, $[F_+,G_\pm]=[F_\pm,G_+]=\{F_-,G_-\}=0$ for $F\in\fF(W), G\in\fF(W')$ \cite{DoplicherHaagRoberts:1969}.

The covariance requirement \eqref{covariance} entails that for any Killing pair $\xi\in\Xi$, the algebra $\fF$ carries a corresponding $\Rl^2$-action $\tau_\xi$, defined by
\begin{align*}
  \tau_{\xi,s}:=\alpha_{\flow_{\xi,s}}=\text{ad}\,U_\xi(s)\,,\qquad s\in\Rl^2\,,
\end{align*}
where $U_\xi(s)$ is shorthand for $U(\flow_{\xi,s})$. Since a wedge of the form $W_{\xi,p}$ with some $p\in M$ is invariant under the flows $\varphi_{N\xi,s}$ for any $N\in\GL(2,\Rl)$ (see Prop. \ref{prop:wedge-properties} c) and Lemma \ref{lemma:group-xi}), we have in view of isotony
\begin{align*}
 \tau_{N\xi,s}(\fF(W_{\xi,p}))
=
\fF(W_{\xi,p})
\,,\qquad
N\in\GL(2,\Rl),\; s\in\Rl^2\,.
\end{align*}

In this setting, all structural elements necessary for the application of warped convolution deformations \cite{BuchholzLechnerSummers:2010} are present, and we will use this technique to define a deformed net $W\longmapsto\fF(W)_\la$ of $C^*$-algebras on $M$, depending on a deformation parameter $\la\in\Rl$. For $\la=0$, we will recover the original theory, $\fF(W)_0=\fF(W)$, and for each $\la\in\Rl$, the three basic properties a)--c) listed above will remain valid. To achieve this, the elements of $\fF(W)$ will be deformed with the help of the Killing flow leaving $W$ invariant. We begin by recalling some definitions and results from \cite{BuchholzLechnerSummers:2010}, adapted to the situation at hand.
\\\\
Similar to the Weyl product appearing in the quantization of classical systems, the warped convolution deformation is defined in terms of oscillatory integrals of $\fF$-valued functions, and we have to introduce the appropriate smooth elements first. The action $\alpha$ is a strongly continuous representation of the Lie group $\iso$, which acts automorphically and thus isometrically on the $C^*$-algebra $\fF$. In view of these properties, the smooth elements $\fF^\infty:=\{F\in\fF\,:\,\iso\ni h\mapsto\alpha_h(F)\;\text{is } \|\cdot\|_\fF\text{-smooth}\}$ form a norm-dense ${}^*$-subalgebra $\fF^\infty\subset\fF$ (see, for example, \cite{Taylor:1986}). However, the subalgebras $\fF(W_{\xi,p})\subset\fF$ are in general only invariant under the $\Rl^2$-action $\tau_\xi$, and we therefore also introduce a weakened form of smoothness. An operator $F\in\fF$ will be called $\xi${\em -smooth} if
\begin{align}
 \Rl^2\ni s\mapsto \tau_{\xi,s}(F)\in\fF
\end{align}
is smooth in the norm topology of $\fF$. On the Hilbert space level, we have a dense domain $\Hil^\infty:=\{\Psi\in\Hil\,:\,\iso\ni h\mapsto U(h)\Psi\;\text{is } \|\cdot\|_\Hil\text{-smooth}\}$ of smooth vectors in $\Hil$.

As further ingredients for the definition of the oscillatory integrals, we pick a smooth, compactly supported ``cutoff'' function $\chi\in C_0^\infty(\Rl^2\times\Rl^2)$ with $\chi(0,0)=1$, and the standard antisymmetric $(2\times2)$-matrix
\begin{align}\label{def:Q}
 Q:=
\left(
\begin{array}{cc}
 0&1\\-1&0
\end{array}
\right)
\,.
\end{align}
With these data, we associate to a $\xi$-smooth $F\in\fF$ the deformed operator ({\em warped convolution}) \cite{BuchholzLechnerSummers:2010}
\begin{align}\label{def:Ala}
F_{\xi,\la}
:=
\frac{1}{4\pi^2}
\lim_{\eps\to0}
\int ds\,ds'\,e^{-iss'}\chi(\eps s,\eps s')\,U_\xi(\la Qs)FU_\xi(s'-\la Qs)
\,,
\end{align}
where $\la$ is a real parameter, and $ss'$ denotes the standard Euclidean inner product of $s,s'\in\Rl^2$. The above limit exists in the strong operator topology of $\B(\Hil)$ on the dense subspace $\Hil^\infty$, and is independent of the chosen cutoff function $\chi$ within the specified class. The thus (densely) defined operator $F_{\xi,\la}$ can be shown to extend to a bounded $\xi$-smooth operator on all of $\Hil$, which we denote by the same symbol \cite{BuchholzLechnerSummers:2010}. As can be seen from the above formula, setting $\la=0$ yields the undeformed operator $F_{\xi,0}=F$, for any $\xi\in\Xi$.

The deformation $F\to F_{\xi,\la}$ is closely related to Rieffel's deformation of $C^*$-algebras \cite{Rieffel:1992}, where one introduces the deformed product
\begin{align}\label{rieffel-product}
 F\times_{\xi,\la} G :=
\frac{1}{4\pi^2}
\lim_{\eps\to0}
\int ds\,ds'\,e^{-iss'}\chi(\eps s,\eps s')\,\tau_{\xi,\la Qs}(F)\tau_{\xi,s'}(G)
\,.
\end{align}
This limit exists in the norm topology of $\fF$ for any $\xi$-smooth $F,G\in\fF$, and $F\times_{\xi,\la} G$ is $\xi$-smooth as well.

As is well known, this procedure applies in particular to the deformation of classical theories in terms of star products. As field algebra, one would then take a suitable commutative ${}^*$-algebra of functions on $M$, endowed with the usual pointwise operations. The isometry group acts on this algebra automorphically by pullback, and in particular, the flow $\flow_{\xi}$ of any Killing pair $\xi\in\Xi$ induces automorphisms. The Rieffel product therefore defines a star product on the subalgebra of smooth elements $f,g$ for this action,
\begin{align}
(f\star_{\xi,\la} g)(p)
=
\frac{1}{4\pi^2}
\lim_{\eps\to0}
\int d^2s\,d^2s' e^{-iss'}\,\chi(\eps s,\eps s')\, f(\flow_{\xi,\la Q s}(p))g(\flow_{\xi,s'}(p))
\,.
\end{align}
The function algebra endowed with this star product can be interpreted as a noncommutative version of the manifold $M$, similar to the flat case \cite{GayralGraciaBondiaIochumSchuckerVarilly:2004}. Note that since we are using a two-dimensional spacelike flow on a four-dimensional spacetime, the deformation corresponds to a noncommutative Minkowski space with ``commuting time'' in the flat case.
\\\\
The properties of the deformation map $F\to F_{\xi,\la}$ which will be relevant here are the following.
\begin{lemma}{\bf \cite{BuchholzLechnerSummers:2010}:}\label{thm:deformationproperties}
\\
Let $\xi\in\Xi$, $\la\in\Rl$, and consider $\xi$-smooth operators
$F,G\in\fF$. Then
 \begin{enumerate}
  \item ${F_{\xi,\la}}^*={F^*}_{\xi,\la}$.
  \item $F_{\xi,\la} G_{\xi,\la} = (F\times_{\xi,\la} G)_{\xi,\la}$.
  \item If\,\footnote{In \cite{BuchholzSummers:2008,BuchholzLechnerSummers:2010}, this statement is shown to hold under the weaker assumption that the commutator $[\tau_{\xi,s}(F),\,G]$ vanishes only for all $s\in S+S$, where $S$ is the joint spectrum of the generators of the $\Rl^2$-representation $U_\xi$ implementing $\tau_\xi$. But since usually $S=\Rl^2$ in the present setting, we refer here only to the weaker statement, where $S+S\subset\Rl^2$ has been replaced by $\Rl^2$.} $[\tau_{\xi,s}(F),\,G]=0$ for all $s\in\Rl^2$, then $[F_{\xi,\la}, G_{\xi,-\la}]=0$.
  \item If a unitary $Y\in\B(\Hil)$ commutes with $U_\xi(s)$, $s\in\Rl^2$, then $YF_{\xi,\la}Y^{-1} = (YFY^{-1})_{\xi,\la}$, and $YF_{\xi,\la}Y^{-1}$
 is $\xi$-smooth.
 \end{enumerate}
\end{lemma}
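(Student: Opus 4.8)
The plan is to cite the corresponding statements from \cite{BuchholzLechnerSummers:2010} and adapt them to our slightly more general setting, where the $\Rl^2$-action is the Killing flow $\tau_\xi$ rather than a translation action. All four assertions are statements about the single $C^*$-algebra $\fF$ together with the $\Rl^2$-action $\tau_\xi$ and its implementation $U_\xi$, so the abstract deformation results of Rieffel and of \cite{BuchholzLechnerSummers:2010} apply verbatim once one checks that the hypotheses there are met: $\tau_\xi$ is a strongly continuous isometric action of $\Rl^2$ on $\fF$ (this follows from strong continuity of $\alpha$ and smoothness of the flow $\flow_\xi$), and $U_\xi$ is a strongly continuous unitary representation of $\Rl^2$ implementing it. The matrix $Q$ in \eqref{def:Q} is the standard antisymmetric generator used there, so no reparametrisation is needed.

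For part a), I would note that the oscillatory integral \eqref{def:Ala} defining $F_{\xi,\la}$, when one takes adjoints, can be rewritten — using the substitution $s\mapsto -s$, $s'\mapsto -s'$ together with unitarity of $U_\xi$ and the antisymmetry of $Q$ (so $-\la Q(-s)=\la Qs$) — as the defining integral for $(F^*)_{\xi,\la}$; the cutoff $\chi$ may be replaced by $\chi(-\,\cdot,-\,\cdot)$, which lies in the same admissible class, and the limit is cutoff-independent. For part b), I would invoke the associativity/product formula of Rieffel's deformed product together with the fact (established in \cite{BuchholzLechnerSummers:2010}) that $F\mapsto F_{\xi,\la}$ intertwines the deformed product $\times_{\xi,\la}$ with operator multiplication on $\B(\Hil)$; this is precisely the multiplicativity statement there, whose proof is a Fubini-type manipulation of the double oscillatory integrals, legitimate on the smooth domain $\Hil^\infty$. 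For part c), the key input is that $[\tau_{\xi,s}(F),G]=0$ for all $s$ forces the warped operators deformed with opposite signs of $\la$ to commute; this is the ``wedge-commutativity'' lemma of \cite{BuchholzSummers:2008,BuchholzLechnerSummers:2010}, and as the footnote indicates, we only need the simpler version where the joint spectrum $S$ of the generators of $U_\xi$ is all of $\Rl^2$, so $S+S=\Rl^2$ and the commutator hypothesis is required on all of $\Rl^2$. For part d), one observes that if a unitary $Y$ commutes with every $U_\xi(s)$, then $Y$ can be pulled through the integrand of \eqref{def:Ala}: $Y\,U_\xi(\la Qs)FU_\xi(s'-\la Qs)\,Y^{-1} = U_\xi(\la Qs)(YFY^{-1})U_\xi(s'-\la Qs)$, and since the strong limit on $\Hil^\infty$ is preserved under conjugation by the bounded operator $Y$, one gets $YF_{\xi,\la}Y^{-1}=(YFY^{-1})_{\xi,\la}$; $\xi$-smoothness of the result follows because $\tau_{\xi,s}(YFY^{-1}) = Y\tau_{\xi,s}(F)Y^{-1}$ (again using $[Y,U_\xi(s)]=0$) depends smoothly on $s$.

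The only genuine point requiring care — and hence the main obstacle — is verifying that the hypotheses of the cited abstract results are literally satisfied here, in particular that $\tau_\xi$ is a \emph{strongly continuous} $\Rl^2$-action on the $C^*$-algebra $\fF$ by isometric automorphisms and that the notion of $\xi$-smoothness used to define the oscillatory integral coincides with the smoothness notion in \cite{BuchholzLechnerSummers:2010}. Strong continuity follows since $s\mapsto\flow_{\xi,s}$ is smooth into $\iso$ and $\alpha$ is strongly continuous; isometry is automatic for automorphisms of a $C^*$-algebra. Once this is in place, all four items are immediate transcriptions of the corresponding statements in \cite{BuchholzLechnerSummers:2010}, with ``translation'' replaced throughout by ``Killing flow $\flow_\xi$''. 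I would therefore present the proof as essentially a reference to Lemma-level results there, spelling out only the substitutions for a) and the conjugation argument for d).
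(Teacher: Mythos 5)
Your proposal takes essentially the same route as the paper: the paper offers no proof of this lemma at all, quoting it directly from \cite{BuchholzLechnerSummers:2010}, and your argument is likewise a citation of those results after verifying that $\tau_\xi$ is a strongly continuous isometric $\Rl^2$-action implemented by $U_\xi$, which is exactly the intended reading. One small caveat on your optional sketch for a): the substitution $s\mapsto -s$, $s'\mapsto -s'$ does not by itself produce $(F^*)_{\xi,\la}$ (after taking adjoints the phase is $e^{+iss'}$ and the unitaries appear in the reversed order $U_\xi(s'-\la Qs)F^*U_\xi(\la Qs)$, and both defects survive that substitution); the standard manipulation uses a shear-type change of variables such as $(s,s')\mapsto\bigl(s-\la^{-1}Q^{-1}s',\,-s'\bigr)$, whose unit Jacobian together with the antisymmetry of $Q$ restores the phase $e^{-iss'}$ and the correct ordering --- but since you in any case invoke the cited result, this does not affect the soundness of the proposal.
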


Since we are dealing here with a field algebra obeying twisted locality, we also point out that statement c) of the above lemma carries over to the twisted local case.

\begin{lemma}\label{lemma:twist}
Let $\xi\in\Xi$ and $F,G\in\fF$ be $\xi$-smooth such that $[Z \tau_{\xi,s}(F)Z^*,\,G]=0$. Then
\begin{align}
 [ZF_{\xi,\la}Z^*, G_{\xi,-\la}]=0\,.
\end{align}
\end{lemma}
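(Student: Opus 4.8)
The plan is to reduce Lemma~\ref{lemma:twist} to part c) of Lemma~\ref{thm:deformationproperties} by absorbing the twist operator $Z$ into the operators being deformed. The key observation is that $Z=\frac{1}{\sqrt2}(1-iV)$ is built from the Bose/Fermi implementer $V$, which commutes with $\alpha$ and hence with the representation $U$ of $\iso$; in particular $V$ commutes with $U_\xi(s)$ for all $s\in\Rl^2$. Therefore $Z$ itself commutes with $U_\xi(s)$, and so does $Z^*$. By part d) of Lemma~\ref{thm:deformationproperties}, conjugation by the unitary $Z$ (or $Z^*$) commutes with the deformation: for any $\xi$-smooth $F$,
\begin{align*}
 Z F_{\xi,\la} Z^* = (Z F Z^*)_{\xi,\la}\,,
\end{align*}
and moreover $ZFZ^*$ is again $\xi$-smooth, so the right-hand side makes sense.

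Granting this, the proof is short. First I would set $F' := ZFZ^*$. The hypothesis $[Z\tau_{\xi,s}(F)Z^*,\,G]=0$ becomes $[\tau_{\xi,s}(F'),\,G]=0$ for all $s\in\Rl^2$, since $Z$ commutes with $U_\xi(s)$ and hence $Z\tau_{\xi,s}(F)Z^* = \tau_{\xi,s}(ZFZ^*) = \tau_{\xi,s}(F')$. Now $F'$ and $G$ are $\xi$-smooth operators satisfying the commutation hypothesis of Lemma~\ref{thm:deformationproperties}~c), so that lemma yields $[F'_{\xi,\la},\,G_{\xi,-\la}]=0$. Finally, using the intertwining relation above in the form $F'_{\xi,\la} = (ZFZ^*)_{\xi,\la} = ZF_{\xi,\la}Z^*$, this is exactly $[ZF_{\xi,\la}Z^*,\,G_{\xi,-\la}]=0$, which is the assertion.

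The only genuinely substantive point to verify carefully is that $Z$ commutes with $U_\xi(s)$. This follows because $\gamma$ commutes with $\alpha$ by assumption, $\gamma$ is implemented by $V$, and $\alpha$ is implemented by $U$; one should check that the commutation of the automorphisms lifts to commutation of the implementing unitaries, which is immediate here since $V=V^*=V^{-1}$ and $U(h)VU(h)^{-1}$ implements $\alpha_h\gamma\alpha_h^{-1}=\gamma$, and the implementers are being treated as fixed operators (there is no projective ambiguity to worry about, $V$ being explicitly given). I would state this as the one line that makes the reduction work, and otherwise the argument is a direct application of Lemma~\ref{thm:deformationproperties}~c) and~d). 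I do not anticipate any real obstacle; the lemma is essentially a twisted bookkeeping version of the locality statement already proved in \cite{BuchholzLechnerSummers:2010}.
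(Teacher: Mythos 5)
Your proof is correct and follows essentially the same route as the paper: note that $Z$ commutes with $U_\xi(s)$, use Lemma \ref{thm:deformationproperties} d) to identify $ZF_{\xi,\la}Z^*=(ZFZ^*)_{\xi,\la}$ with $ZFZ^*$ again $\xi$-smooth, and then apply Lemma \ref{thm:deformationproperties} c) to $ZFZ^*$ and $G$. The paper's own argument is exactly this reduction, so there is nothing to add.
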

\begin{proof}
The Bose/Fermi operator $V$ commutes with the representation of the isometry group, and thus the same holds true for the twist $Z$ \eqref{def:Z}. So
in view of Lemma \ref{thm:deformationproperties} d), the assumption implies that $ZFZ^*$ is $\xi$-smooth, and $[\tau_{\xi,s}(ZFZ^*),\,G]=0$ for all $s\in\Rl^2$. In view of Lemma \ref{thm:deformationproperties} c), we thus have $[(ZFZ^*)_{\xi,\la}, G_{\xi,-\la}]=0$.  But as $Z$ and $U_\xi(s)$ commute, $(ZFZ^*)_{\xi,\la}=ZF_{\xi,\la}Z^*$, and the claim follows.
\end{proof}

The results summarized in Lemma \ref{thm:deformationproperties} and Lemma \ref{lemma:twist} will be essential for establishing the isotony and twisted locality properties of the deformed quantum field theory. To also control the covariance properties relating different Killing pairs, we need an additional lemma, closely related to \cite[Prop. 2.9]{BuchholzLechnerSummers:2010}.

\begin{lemma}\label{lemma:deformed-operators}
 Let $\xi\in\Xi$, $\la\in\Rl$, and $F\in\fF$ be $\xi$-smooth.
\begin{enumerate}
 \item Let $h\in\iso$. Then $\alpha_h(F)$ is $h_*\xi$-smooth, and
\begin{align}\label{axi-covariance}
 \alpha_h(F_{\xi,\la})
=
\alpha_h(F)_{h_*\xi,\la}
\,.
\end{align}
\item For $N\in\GL(2,\Rl)$, it holds
\begin{align}
 F_{N\xi,\la} &=F_{\xi,\det N\cdot \la}\,. \label{ANxi}
\end{align}
In particular,
\begin{align}\label{Axi-prime}
F_{\xi',\la} &= F_{\xi,-\la}\,.
\end{align}
\end{enumerate}
\end{lemma}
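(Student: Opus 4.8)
The plan is to unpack the defining oscillatory integral \eqref{def:Ala} for $F_{\xi,\lambda}$ and push the relevant group elements through it, using the transformation laws established earlier for the Killing flows. For part (a), I would start from the definition
\begin{align*}
\alpha_h(F_{\xi,\la})
=
\frac{1}{4\pi^2}\lim_{\eps\to0}\int ds\,ds'\,e^{-iss'}\chi(\eps s,\eps s')\,\alpha_h\big(U_\xi(\la Qs)\,F\,U_\xi(s'-\la Qs)\big)\,,
\end{align*}
(justifying that $\alpha_h$, being implemented by the unitary $U(h)$, commutes with the strong limit on $\Hil^\infty$), and then use $U(h)U_\xi(s)U(h)^{-1}=U(h\flow_{\xi,s}h^{-1})=U(\flow_{h_*\xi,s})=U_{h_*\xi}(s)$. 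This is the same conjugation-of-flows fact already used in the proof of the corollary on page~\pageref{h-W} (``conjugation of flows by isometries amounts to the push-forward of the vector field''), combined with the covariant implementation $U(h)FU(h)^{-1}=\alpha_h(F)$. The $\xi$-smoothness of $F$ transporting to $h_*\xi$-smoothness of $\alpha_h(F)$ is immediate since $\tau_{h_*\xi,s}(\alpha_h(F))=\alpha_h(\tau_{\xi,s}(F))$ and $\alpha_h$ is isometric and norm-continuous; this also guarantees the integral on the right-hand side of \eqref{axi-covariance} converges. After the substitution the integrand becomes $U_{h_*\xi}(\la Qs)\,\alpha_h(F)\,U_{h_*\xi}(s'-\la Qs)$, which is exactly the integrand defining $\alpha_h(F)_{h_*\xi,\la}$, proving \eqref{axi-covariance}.

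For part (b), the key input is the relation \eqref{N-flow} from Lemma~\ref{lemma:group-xi}, i.e. $\flow_{N\xi,s}=\flow_{\xi,N^Ts}$, hence $U_{N\xi}(s)=U_\xi(N^Ts)$. Substituting into \eqref{def:Ala} gives an integrand $U_\xi(\la N^TQs)\,F\,U_\xi(N^T s'-\la N^TQs)$. I would then change variables $s\mapsto s$, $s'\mapsto (N^T)^{-1}s'$ (or, more symmetrically, $s\mapsto u$, $N^Ts'\mapsto v$) to bring this back to the standard form; the Jacobian and the change in the phase $e^{-iss'}$ have to be tracked. The phase $s\cdot s'$ becomes $s\cdot (N^T)^{-1}s' $ — to restore the standard pairing one rescales $s$ as well, and the net effect is that $Q$ gets replaced by $N^TQN$, while the cutoff $\chi$ changes within the allowed class (which does not affect the limit). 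Finally one uses the algebraic identity $N^TQN=\det N\cdot Q$, valid for any $2\times 2$ matrix $N$ with the standard symplectic form $Q$ of \eqref{def:Q} (this is just the statement that $Q$ generates the determinant on $2\times2$ matrices); this yields $F_{N\xi,\la}=F_{\xi,\det N\cdot\la}$. The special case \eqref{Axi-prime} then follows by taking $N=\Pi$ the flip matrix, for which $\det\Pi=-1$, so $F_{\xi',\la}=F_{\Pi\xi,\la}=F_{\xi,-\la}$; this also matches \eqref{def:xiprime}.

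The main obstacle is the bookkeeping in part (b): one must carefully justify the change of variables inside the oscillatory (not absolutely convergent) integral, verifying that the regularizing cutoff $\chi(\eps s,\eps s')$ is transformed into another admissible cutoff $\tilde\chi$ with $\tilde\chi(0,0)=1$ and $\tilde\chi\in C_0^\infty$, so that the $\eps\to0$ limit is unchanged by the cutoff-independence statement quoted after \eqref{def:Ala}. This is precisely the content of the Minkowski-space computation in \cite[Prop.~2.9]{BuchholzLechnerSummers:2010}, and the cleanest route is to invoke that result essentially verbatim, since the only structure used there is the $\Rl^2$-action $U_\xi$ and the matrix $Q$; nothing about flatness of spacetime enters. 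I would therefore phrase the proof of (b) as a direct application of the transformation behaviour of warped convolutions under linear changes of the $\Rl^n$-representation, with $n=2$, noting that $U_{N\xi}=U_\xi\circ N^T$ by \eqref{N-flow}, and then citing \cite[Prop.~2.9]{BuchholzLechnerSummers:2010} for the identity $F_{N\xi,\la}=F_{\xi,\det N\cdot\la}$, rather than redoing the oscillatory-integral manipulation in detail.
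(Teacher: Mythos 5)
Your proposal is correct and follows essentially the same route as the paper's proof: part (a) via $h\flow_{\xi,s}h^{-1}=\flow_{h_*\xi,s}$ and strong convergence on $\Hil^\infty$, part (b) via $U_{N\xi}(s)=U_\xi(N^Ts)$, a linear change of variables whose effect on the phase and cutoff is absorbed by cutoff-independence, and the identity $N^TQN=\det N\cdot Q$, with \eqref{Axi-prime} obtained from $\det\Pi=-1$. The only difference is cosmetic: where you suggest optionally citing \cite[Prop.~2.9]{BuchholzLechnerSummers:2010} verbatim, the paper carries out the oscillatory-integral substitution explicitly on smooth vectors, exactly as in your detailed version.
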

\begin{proof}
a) The flow of $\xi$ transforms under $h$ according to $ h\flow_{\xi,s} = \flow_{h_*\xi,s}h$, so that $\alpha_h(\tau_{\xi,s}(F))=\tau_{h_*\xi,s}(\alpha_h(F))$. Since $F$ is $\xi$-smooth, and $\alpha_h$ is isometric, the smoothness of $s\mapsto\tau_{h_*\xi,s}(\alpha_h(F))$ follows. Using the strong convergence of the oscillatory integrals \eqref{def:Ala}, we compute on a smooth vector $\Psi\in\Hil^\infty$
\begin{align*}
 \alpha_h(F_{\xi,\la})\Psi
&=
\frac{1}{4\pi^2}
\lim_{\eps\to0}
 \int ds\,ds'\, e^{-iss'}\, \chi(\eps s,\eps s')\,U(h \flow_{\xi,\la Qs}h^{-1})\alpha_h(F)U(h \flow_{\xi,s'-\la Qs}h^{-1})\Psi
\\
&=
\frac{1}{4\pi^2}
\lim_{\eps\to0}
 \int ds\,ds'\, e^{-iss'}\, \chi(\eps s,\eps s')\,U(\flow_{h_*\xi,\la Qs})\alpha_h(F)U(\flow_{h_*\xi,s'-\la Qs})\Psi
\\
&=
\alpha_h(F)_{h_*\xi,\la}\Psi
\,,
\end{align*}
which entails \eqref{axi-covariance} since $\Hil^\infty\subset\Hil$ is dense.

b) In view of the transformation law $\varphi_{N\xi,s}=\varphi_{\xi,N^Ts}$ \eqref{N-flow}, we get, $\Psi\in\Hil^\infty$,
\begin{align*}
&F_{N\xi,\la}\Psi
=
\frac{1}{4\pi^2}
\lim_{\eps\to0}
 \int ds\,ds'\, e^{-iss'}\, \chi(\eps s,\eps s')\,U(\flow_{N\xi,\la Qs}) F U(\flow_{N\xi,s'-\la Qs})\Psi
\\
&=
\frac{1}{4\pi^2|\det N|}
\lim_{\eps\to0}
 \int ds\,ds'\, e^{-i(N^{-1}s,s')}\, \chi(\eps s,\eps (N^T)^{-1}s')\,
U_\xi(\la N^TQs) F U_\xi(s'-\la N^TQs)\Psi
\\
&=
\frac{1}{4\pi^2}
\lim_{\eps\to0}
 \int ds\,ds'\, e^{-iss'}\, \chi(\eps N s,\eps (N^T)^{-1}s')\,
U_\xi(\la N^TQNs) F U_\xi(s'-\la N^TQN s)\Psi
\\
&=
F_{\xi,\det N\cdot \la}\Psi
\,.
\end{align*}
In the last line, we used the fact that the value of the oscillatory integral does not depend on the choice of cutoff function $\chi$ or $\chi_N(s,s'):=\chi(Ns,(N^T)^{-1}s')$, and the equation $N^TQN=\det N\cdot Q$, which holds for any $(2\times2)$-matrix $N$.

This proves \eqref{ANxi}, and since $\xi'=\Pi\xi$, with the flip matrix $\Pi=\left(0\;\;1\atop1\;\;0\right)$ which has $\det\Pi=-1$, also \eqref{Axi-prime} follows.
\end{proof}

Having established these properties of individual deformed operators, we now set out to deform the net $W\mapsto\fF(W)$ of wedge algebras. In contrast to the Minkowski space setting \cite{BuchholzLechnerSummers:2010}, we are here in a situation where the set $\Xi$ of all Killing pairs is not a single orbit of one reference pair under the isometry group. Whereas the deformation of a net of wedge algebras on Minkowski space amounts to deforming a single algebra associated with a fixed reference wedge (Borchers triple), we have to specify here more data, related to the coherent subfamilies $\W_{[\xi]}$ in the decomposition $\W=\bigsqcup_{[\xi]}\W_{[\xi]}$ of $\W$ \eqref{W-decomposition}. For each equivalence class $[\xi]$, we choose a representative $\xi$. In case there exists only a single equivalence class, this simply amounts to fixing a reference wedge together with a length scale for the Killing flow. With this choice  of representatives $\xi\in[\xi]$ made, we introduce the sets, $p\in M$,
\begin{align}
\fF(W_{\xi,p})_\la
&:=
\{F_{\xi,\la}\,:\, F\in\fF(W_{\xi,p}) \;\,\xi\text{-smooth } \}^{\|\cdot\|}
\,,
\label{def:awla}
\\
\fF(W_{\xi',p})_\la
&:=
\{F_{\xi',\la}\,:\, F\in\fF(W_{\xi,p}') \;\,\xi'\text{-smooth } \}^{\|\cdot\|}
\,.
\label{def:awla-2}
\end{align}
Here $\la\in\Rl$ is the deformation parameter, and the superscript denotes norm closure in $\B(\Hil)$. Note that the deformed operators in $\fF(W_{\xi',p})_\la$ have the form $F_{\xi',\la}=F_{\xi,-\la}$ \eqref{Axi-prime}, {\em i.e.}, the sign of the deformation parameter depends on the choice of reference Killing pair.

The definitions (\ref{def:awla}, \ref{def:awla-2}) are extended to arbitrary wedges by setting
\begin{align}\label{def:AhWla}
 \fF(hW_{\xi,p})_\la := \alpha_h(\fF(W_{\xi,p})_\la)
\,,\qquad
 \fF(hW_{\xi,p}')_\la := \alpha_h(\fF(W_{\xi,p}')_\la)\,.
\end{align}
Recall that as $h$, $p$ and $[\xi]$ vary over $\iso$, $M$ and $\Xi\slash\!\!\sim$, respectively, this defines $\fF(W)_\la$ for all $W\in\W$ ({\it cf.} \eqref{def:Wxi}). It has to be proven that this assignment is well-defined, {\em e.g.} that \eqref{def:AhWla} is independent of the way the wedge $hW_{\xi,p}=\hti W_{\xi,\pti}$ is represented. This will be done below. However, note that the definition of $\fF(W)_\la$ {\em does} depend on our choice of representatives $\xi\in[\xi]$, since rescaling $\xi$ amounts to rescaling the deformation parameter (Lemma \ref{lemma:deformed-operators} b)).
\\
\\
Before establishing the main properties of the assignment $W\to\fF(W)_\la$, we check that the sets (\ref{def:awla}, \ref{def:awla-2}) are $C^*$-algebras. As the $C^*$-algebra $\fF(W_{\xi,p})$ is $\tau_{\xi}$-invariant and $\tau_{\xi}$ acts strongly continuously, the $\xi$-smooth operators in $\fF(W_{\xi,p})$ which appear in the definition \eqref{def:awla} form a norm-dense ${}^*$-subalgebra. Now the deformation $F \mapsto F_{\xi,\la}$ is evidently linear and commutes with taking adjoints (Lemma \ref{thm:deformationproperties} a)); so the sets $\fF(W_{\xi,p})_\la$ are ${}^*$-invariant norm-closed subspaces of $\B(\Hil)$. To check that these spaces are also closed under taking products, we again use the invariance of $\fF(W_{\xi,p})$ under $\tau_\xi$: By inspection of the Rieffel product \eqref{rieffel-product}, it follows that for any two $\xi$-smooth $F,G\in\fF(W_{\xi,p})$,  also the product $F\times_{\xi,\la} G$ lies in this algebra (and is $\xi$-smooth, see \cite{Rieffel:1992}). Hence the multiplication formula from Lemma \ref{thm:deformationproperties} b) entails that the above defined $\fF(W)_\la$ are actually $C^*$-algebras in $\B(\Hil)$.
\\
\\
The map $W\mapsto\fF(W)_\la$ defines the wedge-local field algebras of the deformed quantum field theory. Their basic properties are collected in the following theorem.

\begin{theorem}\label{thm:deformed-net}
The above constructed map $W\longmapsto\fF(W)_\la$, $W\in\W$, is a well-defined, isotonous, twisted wedge-local, $\iso$-covariant net of $C^*$-algebras on $\Hil$, i.e., $W,\Wti\in\W$,
\begin{align}
& \fF(W)_\la \subset \fF(\Wti)_\la \hspace*{19mm}\text{for } W\subset\Wti\,,\label{isotony2}
\\
& [Z F_\la Z^*,\, G_\la] = 0\hspace*{19mm}\text{for } F_\la\in\fF(W)_\la, G_\la\in\fF(W')_\la\,,\label{locality2}
\\
& \alpha_h(\fF(W)_\la) = \fF(hW)_\la\,,\hspace*{8mm} h\in \iso\,.\label{covariance2}
\end{align}
For $\la=0$, this net coincides with the original net, $\fF(W)_0=\fF(W)$, $W\in\W$.
\end{theorem}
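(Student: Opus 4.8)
The strategy is to verify the four asserted properties in turn, dealing first with the well-definedness, since the other three then follow by reduction to the reference Killing pairs and the lemmas already established.

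\emph{Well-definedness.} The potential ambiguity in \eqref{def:AhWla} is twofold: a given wedge $W$ may be written as $hW_{\xi,p}$ in many ways, and the construction singles out a representative $\xi$ in each equivalence class $[\xi]$. For the first issue, suppose $hW_{\xi,p}=\hti W_{\xi,\pti}$ with the same chosen representative $\xi$; then $\hti^{-1}h$ maps $W_{\xi,p}$ onto $W_{\xi,\pti}$, and by the corollary on the family of wedges (the relation $h(W_{\xi,p})=W_{h_*\xi,h(p)}$) together with Lemma~\ref{lemma:group-xi}, one has $(\hti^{-1}h)_*\xi=N\xi$ for some $N\in\GL(2,\Rl)$ with $\det N>0$ (the sign is forced because the wedge, not its causal complement, is preserved). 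Using Lemma~\ref{lemma:deformed-operators}a) to pull $\alpha_{\hti^{-1}h}$ through the deformation, and then Lemma~\ref{lemma:deformed-operators}b) (the identity $F_{N\xi,\la}=F_{\xi,\det N\cdot\la}=F_{\xi,\la}$ when $\det N>0$) shows that the two prescriptions for $\fF(W)_\la$ coincide. The appearance of $\xi'$ in a competing description of the same wedge is handled likewise via \eqref{Axi-prime} together with Prop.~\ref{prop:wedge-properties}b), which matches $W_{\xi',p}$ with the causal complement; here the definitions \eqref{def:awla} and \eqref{def:awla-2} have been set up precisely so that the bookkeeping of the sign of $\la$ is consistent. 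One should also note that the smooth elements used in \eqref{def:awla} are norm-dense in $\fF(W_{\xi,p})$ (by strong continuity of $\tau_\xi$ and $\tau_\xi$-invariance of the algebra), so the closures in \eqref{def:awla}, \eqref{def:awla-2} genuinely capture the deformation of the full wedge algebra.

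\emph{Covariance and isotony.} Covariance \eqref{covariance2} is essentially immediate from the definition \eqref{def:AhWla} once well-definedness is in hand: for $W=kW_{\xi,p}$ one computes $\alpha_h(\fF(W)_\la)=\alpha_h\alpha_k(\fF(W_{\xi,p})_\la)=\alpha_{hk}(\fF(W_{\xi,p})_\la)=\fF(hkW_{\xi,p})_\la=\fF(hW)_\la$, using that $\alpha$ is an action. For isotony \eqref{isotony2}, Prop.~\ref{prop:inclusions} tells us that $W\subset\Wti$ forces both wedges to lie in the same coherent family, with $\Wti=\hti W_{\xi,\pti}$, $W=hW_{\xi,p}$ for the common representative $\xi$ (after absorbing a positive-determinant $N$ via Lemma~\ref{lemma:deformed-operators}b), which does not change the deformed algebra); applying $\alpha_{\hti}^{-1}$ reduces the claim to an inclusion of the form $\fF(W_{\xi,q})_\la\subset\fF(W_{\xi,\pti})_\la$ (or the analogous statement with one or both edges inverted). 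This in turn follows from the undeformed isotony $\fF(W_{\xi,q})\subset\fF(W_{\xi,\pti})$ together with the fact that the deformation map $F\mapsto F_{\xi,\la}$ is the \emph{same} map on both algebras — it depends only on $\xi$, which is shared — so it carries the dense subalgebra of $\xi$-smooth elements of the smaller algebra into the larger one, and passing to norm closures preserves the inclusion.

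\emph{Twisted locality.} This is the substantive point. Given $F_\la\in\fF(W)_\la$ and $G_\la\in\fF(W')_\la$, one first reduces by covariance to the case $W=W_{\xi,p}$ with $\xi$ the chosen representative, so $W'=W_{\xi',p}$ by Prop.~\ref{prop:wedge-properties}b); thus $F$ is $\xi$-smooth and localized in $W_{\xi,p}$, while $G$ is $\xi'$-smooth and localized in $W_{\xi,p}'=W'$. By \eqref{Axi-prime}, $G_{\xi',\la}=G_{\xi,-\la}$, so the commutator in question is $[ZF_{\xi,\la}Z^*,\,G_{\xi,-\la}]$. Now undeformed twisted locality \eqref{twisted-locality} gives $[Z\tau_{\xi,s}(F)Z^*,\,G]=0$ for all $s\in\Rl^2$, because $\tau_{\xi,s}(F)$ is still localized in the $\tau_\xi$-invariant wedge $W_{\xi,p}$ and $G$ in its causal complement. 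Lemma~\ref{lemma:twist} then yields exactly $[ZF_{\xi,\la}Z^*,\,G_{\xi,-\la}]=0$. Passing from $\xi$-smooth generators to their norm closures preserves the vanishing commutator by continuity of multiplication. I expect this twisted-locality step to be the main obstacle, not because any single ingredient is hard, but because it is the place where all the careful sign conventions (the role of $\xi'$ versus $\xi$, the $-\la$ in \eqref{def:awla-2}, the twist operator $Z$ commuting with $U_\xi$) must line up simultaneously; Lemma~\ref{lemma:twist} and \eqref{Axi-prime} were engineered precisely to make this work, so the proof is really a matter of assembling them correctly. Finally, the statement $\fF(W)_0=\fF(W)$ is read off from $F_{\xi,0}=F$ for every $\xi$, which collapses \eqref{def:awla}, \eqref{def:awla-2} and \eqref{def:AhWla} to the original net.
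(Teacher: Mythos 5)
Your overall route is the same as the paper's: restrict to a coherent subfamily $\W_{[\xi]}$, use Proposition \ref{prop:inclusions} together with Lemma \ref{lemma:deformed-operators} to handle well-definedness and isotony, and reduce twisted locality to the pair $W_{\xi,p}$, $W_{\xi,p}'$ via $G_{\xi',\la}=G_{\xi,-\la}$ \eqref{Axi-prime} and Lemma \ref{lemma:twist}; the paper merely orders things differently (isotony first, with well-definedness obtained as a by-product) and works out explicitly the mixed case $hW_{\xi,p}\subseteq \hti W_{\xi,\pti}'$ with $N'=N\Pi$, $\det N'=-1$, which you only gesture at.

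There is, however, one recurring misstatement that matters. You claim ``$F_{N\xi,\la}=F_{\xi,\det N\cdot\la}=F_{\xi,\la}$ when $\det N>0$'' and later speak of absorbing a positive-determinant $N$ ``which does not change the deformed algebra.'' By \eqref{ANxi}, a positive determinant only gives $F_{N\xi,\la}=F_{\xi,\det N\,\la}$, i.e.\ a rescaling of the deformation parameter; if $\det N$ could be, say, $2$, then two presentations $hW_{\xi,p}=\hti W_{\xi,\pti}$ of the same wedge would produce $\fF(\cdot)_\la$ and $\fF(\cdot)_{2\la}$, and both well-definedness and your isotony reduction would fail. What saves the argument is the last statement of Lemma \ref{lemma:group-xi}: whenever $h_*\xi=N\xi$ for an isometry $h$, necessarily $\det N=\pm1$; combined with $\det N>0$ from Proposition \ref{prop:inclusions} this forces $\det N=1$, and only then does Lemma \ref{lemma:deformed-operators} b) leave the deformed operators unchanged. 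The paper invokes exactly this point. You do cite Lemma \ref{lemma:group-xi}, but for the wrong purpose (the existence of $N$, which in fact comes from Proposition \ref{prop:inclusions} applied to the two-sided inclusion), and you never state $\det N=1$; as written the justification is false and needs this one-line repair. The rest --- covariance, twisted locality, and the $\la=0$ statement --- is assembled exactly as in the paper.
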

\begin{proof}
It is important to note from the beginning that all claimed properties relate only wedges in the same coherent subfamily $\W_{[\xi]}$. This can be seen from the form \eqref{def:Wxi} of $\W_{[\xi]}$, which is manifestly invariant under isometries and causal complementation, and the structure of the inclusions (Proposition \ref{prop:inclusions}). So in the following proof, it is sufficient to consider a fixed but arbitrary equivalence class $[\xi]$, with selected representative $\xi$.

We begin with establishing the isotony of the deformed net, and therefore consider inclusions of wedges of the form $hW_{\xi,p}, hW_{\xi,p}'$, with $h\in\iso$, $p\in M$ arbitrary, and $\xi\in\Xi$ fixed. Starting with the inclusions  $hW_{\xi,p}\subseteq\hti W_{\xi,\pti}$, we note that according to \eqref{h-W} and Prop. \ref{prop:inclusions}, there exists $N\in\GL(2,\Rl)$ with positive determinant such that $h_*\xi=N\hti_*\xi$. Equivalently, $(\hti^{-1}h)_*\xi=N\xi$, which by Lemma \ref{lemma:group-xi} implies $\det N=1$. By definition, a generic $\xi$-smooth element of $\fF(hW_{\xi,p})_\lambda$ is of the form $\alpha_h(F_{\xi,\la})=\alpha_h(F)_{h_*\xi,\la}$ with some $\xi$-smooth $F\in\fF(W_{\xi,p})$. But according to the above observation, this can be rewritten as
\begin{align}\label{calc1}
\alpha_h(F_{\xi,\la})
=
\alpha_h(F)_{h_*\xi,\la}
=
\alpha_h(F)_{N\hti_*\xi,\la}
=
\alpha_h(F)_{\hti_*\xi,\la}
\,,
\end{align}
where in the last equation we used $\det N=1$ and Lemma \ref{lemma:deformed-operators} b). Taking into account that $hW_{\xi,p}\subseteq\hti W_{\xi,\pti}$, and that the undeformed net is covariant and isotonous, we have $\alpha_h(F)\in\fF(hW_{\xi,p})\subset\fF(\hti W_{\xi,p})$, and so the very right hand side of \eqref{calc1} is an element of $\fF(\hti W_{\xi,p})_\la$. Going to the norm closures, the inclusion $\fF(h W_{\xi,p})_\la \subset \fF(\hti W_{\xi,p})_\la$ of $C^*$-algebras follows.

Analogously, an inclusion of causal complements, $hW_{\xi,p}'\subseteq\hti W_{\xi,\pti}'$, leads to the inclusion of $C^*$-algebras $\fF(h W_{\xi,p}')_\la \subset \fF(\hti W_{\xi,p}')_\la$, the only difference to the previous argument consisting in an exchange of $h$,$\hti$ and $p,\pti$.

To complete the investigation of inclusions of wedges in $\W_{[\xi]}$, we must also consider the case $hW_{\xi,p}\subseteq \hti W_{\xi,\pti}'=W_{\hti_*\xi',\pti}$. By the same reasoning as before, there exists a matrix $N$ with $\det N=1$ such that $(\hti^{-1}h)_*\xi=N\xi'=N\Pi\xi$ with the flip matrix $\Pi$. So  $N':=N\Pi$ has determinant $\det N'=-1$, and $h_*\xi = N'\hti_*\xi'$. Using \eqref{Axi-prime}, we find for $\xi$-smooth $F\in\fF(hW_{\xi,p})$,
\begin{align}
\alpha_h(F_{\xi,\la})
=
\alpha_h(F)_{h_*\xi,\la}
=
\alpha_h(F)_{N'\hti_*\xi',\la}
=
\alpha_h(F)_{\hti_*\xi',-\la}
\,.
\end{align}
By isotony and covariance of the undeformed net, this deformed operator is an element of $\fF(\hti W_{\xi,\pti}')_\la$ \eqref{def:awla-2}, and taking the norm closure in \eqref{def:awla-2} yields $\fF(hW_{\xi,p})_\la \subset\fF(\hti W_{\xi,\pti}')_\la$. So the isotony \eqref{isotony2} of the net is established. This implies in particular that the net $\fF_\la$ is well-defined, since in case $hW_{\xi,p}$ equals $\hti W_{\xi,\pti}$ or its causal complement, the same arguments yield the equality of $\fF(hW_{\xi,p})_\la$ and $\fF(\hti W_{\xi,\pti})_\la$ respectively $\fF(\hti W_{\xi,\pti}')_\la$.

The covariance of $W\mapsto\fF(W)_\la$ is evident from the definition. To check twisted locality, it is thus sufficient to consider the pair of wedges $W_{\xi,p}$, $W_{\xi,p}'$. In view of the definition of the $C^*$-algebras $\fF(W)_\la$ \eqref{def:awla} as norm closures of algebras of deformed smooth operators, it suffices to show that any $\xi$-smooth $F\in\fF(W_{\xi,p})$, $G\in\fF(W_{\xi',\pti})$ fulfill the commutation relation
\begin{align}\label{comm-thm}
 [Z F_{\xi,\la}Z^*,\,G_{\xi',\la}]=0\,.
\end{align}
But $G_{\xi',\la}=G_{\xi,-\la}$ \eqref{Axi-prime}, and since the undeformed net is twisted local and covariant, we have $[\tau_{\xi,s}(F),\,G]=0$, for all $s\in\Rl^2$, which implies $[Z F_{\xi,\la}Z^*,\,G_{\xi,-\la}]=0$ by Lemma \ref{lemma:twist}.

The fact that setting $\la=0$ reproduces the undeformed net is a straightforward consequence of $F_{\xi,0}=F$ for any $\xi$-smooth operator, $\xi\in\Xi$.
\end{proof}

Theorem \ref{thm:deformed-net} is our main result concerning the structure of deformed quantum field theories on admissible spacetimes: It states that the same covariance and localization properties as on flat spacetime can be maintained in the curved setting. Whereas the action of the isometry group and the chosen representation space of $\fF$ are the same for all values of the deformation parameter $\la$, the concrete $C^*$-algebras $\fF(W)_\la$ depend in a non-trivial and continuous way on $\la$: For a fixed wedge $W$, the collection $\{\fF(W)_\la\,:\,\la\in\Rl\}$ forms a continuous field of $C^*$-algebras \cite{Dixmier:1977}; this follows from Rieffel's results \cite{Rieffel:1992} and the fact that $\fF(W)_\la$ forms a faithful representation of Rieffel's deformed $C^*$-algebra $(\fF(W),\times_\la)$ \cite{BuchholzLechnerSummers:2010}.

For deformed nets on Minkowski space, there also exist proofs showing that the {\em net} $W\mapsto\fF(W)_\la$ depends on $\la$, for example by working in a vacuum representation and calculating the corresponding collision operators \cite{BuchholzSummers:2008}. There one finds as a striking effect of the deformation that the interaction depends on $\la$, {\em i.e.} that deformations of interaction-free models have non-trivial S-matrices. However, on generic curved spacetimes, a distinguished invariant state like the vacuum state with its positive energy representation of the translations does not exist. Consequently, the result concerning scattering theory cannot be reproduced here. Instead we will establish the non-equivalence of the undeformed net $W\mapsto \fF(W)$ and the deformed net $W\mapsto \fF(W)_\la$, $\la\neq0$, in a concrete example model in Section \ref{sec:dirac}.

As mentioned earlier, the family of wedge regions $\W(M,g)$ is causally separating in a subclass of admissible spacetimes, including the Friedmann-Robertson-Walker universes. In this  case, the extension of the net $\fF_\la$ to double cones or similar regions $\OO\subset M$ via
\begin{align}
 \fF(\OO)_\la := \bigcap_{W\supset\OO}\fF(W)_\la
\end{align}
is still twisted local. These algebras contain all operators localized in the region $\OO$ in the deformed theory. On other spacetimes $(M,g)$, such an extension is possible only for special regions, intersections of wedges, whose shape and size depend on the structure of $\W(M,g)$.

 Because of the relation of warped convolution to noncommutative spaces, where sharp localization is impossible, it is expected that $\fF(\OO)$ contains only multiples of the identity if $\OO$ has compact closure. We will study this question in the context of the deformed Dirac field in Section \ref{sec:dirac}.
\\\\
We conclude this section with a remark concerning the relation between the field and observable net structure of deformed quantum field theories. The field net $\fF$ is composed of Bose and Fermi fields, and therefore contains observable as well as unobservable quantities. The former give rise to the {\em observable net} $\frA$ which consists of the subalgebras invariant under the grading automorphism $\gamma$. In terms of the projection $v(F):=\frac{1}{2}(F+\gamma(F))$, the observable wedge algebras are
\begin{align}\label{def:AW}
 \frA(W) := \{F\in\fF(W)\,:\, F=\gamma(F)\} = v(\fF(W))
\,,\qquad
W\in\W\,,
\end{align}
so that $\frA(W)$, $\frA(\Wti)$ commute (without twist) if $W$ and $\Wti$ are spacelike separated.

Since the observables are the physically relevant objects, we could have considered a deformation $\frA(W)\to\frA(W)_\la$ of the observable wedge algebras along the same lines as we did for the field algebras. This approach would have resulted precisely in the $\gamma$-invariant subalgebras of the deformed field algebras $\fF(W)_\la$, {\em i.e.}, the diagram
$$
\begin{CD}
\fF(W)   @>\text{deformation\;}>>  \fF(W)_\la\\
@VvVV  @VVvV\\
\frA(W)  @>\text{deformation\;}>>  \frA(W)_\la
\end{CD}
$$
commutes. This claim can quickly be verified by noting that the projection $v$ commutes with the deformation map $F\mapsto F_{\xi,\la}$.

\subsection{The Dirac field and its deformation}\label{sec:dirac}

After the model-independent description of deformed quantum field theories carried out in the previous section, we now consider the theory of a free Dirac quantum field as a concrete example model. We first briefly recall the notion of Dirac (co)spinors and the classical Dirac equation, following largely \cite{DappiaggiHackPinamonti:2009, Sanders:2008} and partly \cite{Dimock:1982,FewsterVerch:2002}, where the proofs of all the statements below are presented and an extensive description of the relevant concepts is available. Afterwards, we consider the quantum Dirac field using Araki's self-dual CAR algebra formulation \cite{Araki:1971}.
\\
\\
As before, we work on a fixed but arbitrary admissible spacetime $(M,g)$ in the sense of Definition \ref{admissible} and we fix its orientation. Therefore, as a four-dimensional, time oriented and oriented, globally hyperbolic spacetime, $M$ admits a {\em spin structure} $(SM,\rho)$, consisting of a principle bundle $SM$ over $M$ with structure group $\SL(2,\Cl)$, and a smooth bundle homomorphism $\rho$ projecting $SM$ onto the frame bundle $FM$, which is a principal bundle over $M$ with $SO_0(3,1)$ as structure group. The map $\rho$ preserves base points and is equivariant in the sense that it intertwines the natural right actions $R$ of $\SL(2,\Cl)$ on $SM$ and of $\SO_0(3,1)$ on $FM$, respectively,
\begin{align}
\rho\circ R_{\widetilde\Lambda} = R_\Lambda\circ\rho,\qquad\Lambda\in \SO_0(3,1)\,,
\end{align}
with the covering homomorphism $\Lambda\mapsto\widetilde\Lambda$ from $\SL(2,\Cl)$ to $\SO_0(3,1)$.

Although each spacetime of the type considered here has a spin structure \cite[Thm. 2.1, Lemma 2.1]{DappiaggiHackPinamonti:2009}, this is only unique if the underlying manifold is simply connected \cite{Geroch:1968,Geroch:1970}, {\em i.e.}, if all edges are simply connected in the case of an admissible spacetime. In the following, it is understood that a fixed choice of spin structure has been made.

The main object we shall be interested in is the {\em Dirac bundle}, that is the associated vector bundle
\begin{align}
 DM:= SM\times_T\Cl^4
\end{align}
with the representation $T:=D^{(\frac{1}{2},0)}\oplus D^{(0,\frac{1}{2})}$ of $\SL(2,\Cl)$ on $\Cl^4$. {\em Dirac spinors} $\psi$ are smooth global sections
of $DM$, and the space they span will be denoted $\mathcal{E}(DM)$. The dual bundle $D^*M$ is called the {\em dual Dirac bundle}, and its smooth global sections $\psi'\in\mathcal{E}(D^*M)$ are referred to as {\em Dirac cospinors}.

For the formulation of the Dirac equation, we need two more ingredients. The first are the so-called {\em gamma-matrices}, which are the coefficients of a global tensor $\gamma\in\mathcal{E}(T^*M\otimes DM\otimes D^*M)$ such that $\gamma=\gamma_{aB}^Ae^a\otimes E_A\otimes E^B$. Here $E_A$ and $E^B$ with $A,B=1,...,4$ are four global sections of $DM$ and $D^*M$ respectively, such that $(E_A,E^B)=\delta_A^B$, with $(,)$ the natural pairing between dual elements. Notice that $E_A$ descends also from a global section $E$ of $SM$ since we can define $E_A(x):=[(E(x),z_A)]$ where $z_A$ is the standard basis of $\Cl^4$. At the same time, out of $E$, we can construct $e_a$, with $a=1,...,4$, as a set of four global sections of $TM$ once we define $e:=\rho\circ E$ as a global section of the frame bundle, which, in turn, can be read as a vector bundle over $TM$ with $\Rl^4$ as typical fibre. The set of all $e_a$ is often referred to as the non-holonomic basis of the base manifold. In this case upper indices are defined via the natural pairing over $\Rl^4$, that is $(e^b,e_a)=\delta_a^b$. Furthermore we choose the gamma-matrices to be of the following form:
\begin{align}
\gamma_0=\left(\begin{array}{cc}
I_2 & 0 \\
0 & -I_2
\end{array}\right),\qquad \gamma_a=\left(\begin{array}{cc}
0 & \sigma_a \\
-\sigma_a & 0
\end{array}\right),
\qquad a=1,2,3,
\end{align}
where the $\sigma_a$ are the Pauli matrices, and $I_n$ will henceforth denote the $(n\times n)$ identity matrix. These matrices fulfil the anticommutation relation $\left\{\gamma_a,\gamma_b\right\}=2\eta_{ab}I_4,$ with the flat Minkowski metric $\eta$. They therefore depend on the sign convention in the metric signature, and differ from those introduced in \cite{DappiaggiHackPinamonti:2009}, where a different convention was used.

The last ingredient we need to specify is the {\em covariant derivative} (spin connection) on the space of smooth sections of the Dirac bundle, that is
\begin{align}
\nabla:\mathcal{E}(DM)\to\mathcal{E}(T^*M\otimes DM),
\end{align}
whose action on the sections $E_A$ is given as $\nabla E_A=\sigma_{aA}^B e^a E_B$. The connection coefficients can be expressed as $\sigma^B_{aA}=\frac{1}{4}\Gamma^b_{ad}\gamma_{bC}^B\gamma^{dC}_A$, where $\Gamma^b_{ad}$ are the coefficients arising from the Levi-Civita connection expressed in terms of non-holonomic basis \cite[Lemma 2.2]{DappiaggiHackPinamonti:2009}.
\\\\
We can now introduce the Dirac equation for spinors $\psi\in\mathcal{E}(DM)$ and cospinors  $\psi'\in\mathcal{E}(D^*M)$ as
\begin{align}\label{Diraceq}
D\psi  &:=(-i\gamma^\mu\nabla_\mu+mI)\psi=0\\
D'\psi'&:=(+i\gamma^\mu\nabla_\mu+mI)\psi'=0
\,,
\end{align}
where $m\geq0$ is a constant while $I$ is the identity on the respective space.

The Dirac equation has unique advanced and retarded fundamental solutions \cite{DappiaggiHackPinamonti:2009}: Denoting the smooth and compactly supported sections of the Dirac bundle by $\mathcal{D}(DM)$, there exist two continuous linear maps
$$S^\pm:\mathcal{D}(DM)\to\mathcal{E}(DM),$$
such that $S^\pm D=D S^\pm = I$ and $\supp(S^\pm f)\subseteq J^\pm (\supp(f))$ for all $f\in\mathcal{D}(DM)$, where $J^\pm$ stands for the causal future/past. In the case of cospinors, we shall instead talk about $S_*^\pm:\mathcal{D}(D^*M)\to\mathcal{E}(D^*M)$ and they have the same properties of $S^\pm$, except that $S_*^\pm D'=D'S_*^\pm=I$. In analogy with the theory of real scalar fields, the {\em causal propagators} for Dirac spinors and cospinors are defined as $S := S^+-S^-$ and $S_*:=S^+_*-S^-_*$, respectively.

For the formulation of a quantized Dirac field, it is advantageous to collect spinors and cospinors in a single object. We therefore introduce the space
\begin{align}
 \DD := \DD(DM\oplus D^*M)\,,
\end{align}
on which we have the conjugation
\begin{align}
 \Gamma(f_1\oplus f_2) := f_1^*\beta \oplus \beta^{-1}f_2^*\,,
\end{align}
defined in terms of the adjoint $f\mapsto f^*$ on $\Cl^4$ and the Dirac conjugation matrix $\beta$. This matrix is the unique selfadjoint element of $\SL(4,\Cl)$ with the properties that $\gamma^*_a=-\beta\gamma_a\beta^{-1}$, $a=0,...,3$, and $i\beta n^a\gamma_a$ is a positive definite matrix, for any timelike future-directed vector $n$.

Due to these properties, the sesquilinear form on $\DD$ defined as
\begin{equation}\label{sesquilinear}
(f_1 \oplus f_2,\, h_1\oplus h_2)
:=
- i\langle f_1^*\beta, Sh_1\rangle
+ i\langle S_* h_2, \beta^{-1}f^*_2\rangle,
\end{equation}
where $\langle,\rangle$ is the global pairing between $\E(DM)$ and $\DD(D^*M)$ or between $\E(D^*M)$ and $\DD(DM)$, is positive semi-definite. Therefore the quotient
\begin{align}
 \K := \DD/(\ker S\oplus \ker S_*)\,.
\end{align}
has the structure of a pre-Hilbert space, and we denote the corresponding scalar product and norm by $\langle\,.\,,\,.\,\rangle_S$ and $\|f\|_S:=\langle f,f\rangle_S^{1/2}$. The conjugation $\Gamma$ descends to the quotient $\K$, and we denote its action on $\K$ by the same symbol. Moreover, $\Gamma$ is compatible with the sesquilinear form \eqref{sesquilinear} in such a way that it extends to an antiunitary involution $\Gamma=\Gamma^*=\Gamma^{-1}$ on the Hilbert space $\overline{\K}$ \cite[Lemma 4.2.4]{Sanders:2008}.


Regarding covariance, the isometry group $\iso$ of $(M,g)$ naturally acts on the sections in $\DD$ by pullback. In view of the geometrical nature of the causal propagator, this action descends to the quotient $\K$ and extends to a unitary representation $u$ of $\iso$ on the Hilbert space $\overline{\K}$.


Given the pre-Hilbert space $\K$, the conjugation $\Gamma$, and the representation $u$ as above, the quantized Dirac field can be conveniently described as follows \cite{Araki:1971}.  We consider the $C^*$-algebra $\CAR(\K,\Gamma)$, that is, the unique unital $C^*$-algebra generated by the symbols $B(f)$, $f\in\K$, such that, for all $f,h\in\K$,
\begin{enumerate}
\item $f \longmapsto B(f)$ is complex linear,
\item $B(f)^* = B(\Gamma f)$,
\item $\{B(f),\,B(h)\} = \langle\Gamma f,h\rangle_S\cdot 1$.
\end{enumerate}
The field equation is implicit here since we took the quotient with respect to the kernels of $S,S_*$. The standard picture of spinors and cospinors can be recovered via the identifications $\psi(h)=B(0\oplus h)$ and $\psi^\dagger(f)=B(f\oplus 0)$.

As is well known, the Dirac field satisfies the standard assumptions of quantum field theory, and we briefly point out how this model fits into the general framework used in Section \ref{sec:generaldeformations}. The global field algebra $\fF:=\CAR(\K,\Gamma)$ carries a natural $\iso$-action $\alpha$ by Bogoliubov transformations: Since the unitaries $u(h)$, $h\in\iso$, commute with the conjugation $\Gamma$, the maps
\begin{align*}
 \alpha_h(B(f)) := B(u(h)f)\,,\qquad f\in\K,
\end{align*}
extend to automorphisms of $\fF$. Similarly, the grading automorphism $\gamma$ is fixed by
\begin{align*}
 \gamma(B(f)) := -B(f)\,,
\end{align*}
and clearly commutes with $\alpha_h$. The field algebra is faithfully represented on the Fermi Fock space $\Hil$ over $\overline{\K}$, where the field operators take the form
\begin{align}
 B(f) = \frac{1}{\sqrt{2}}(a(f)^*+a(\Gamma f))\,,
\end{align}
with the usual Fermi Fock space creation/annihilation operators $a^{\#}(f)$, $f\in\K$. In this representation, the second quantization $U$ of $u$ implements the action $\alpha$. The Bose/Fermi-grading can be implemented by $V=(-1)^N$, where $N\in\B(\Hil)$ is the Fock space number operator \cite{Foit:1983}.

Regarding the regularity assumptions on the symmetries, recall that the anticommutation relations of the CAR algebra imply that $\|a(f)\|=\|f\|_S$, and thus $f\mapsto B(f)$ is a linear continuous map from $\K$ to $\fF$. As $s\mapsto u_\xi(s)f$ is smooth in the norm topology of $\K$ for any $\xi\in\Xi$, $f\in\K$, this implies that the field operators $B(f)$ transform smoothly under the action $\alpha$. Furthermore, the unitarity of $u$ yields strong continuity of $\alpha$ on all of $\fF$, as required in Section \ref{sec:generaldeformations}.

The field algebra $\fF(W)\subset\fF$ associated with a wedge $W\in\W$ is defined as the unital $C^*$-algebra generated by all $B(f)$, $f\in\K(W)$, where $\K(W)$ is the set of (equivalence classes of) smooth and compactly supported sections of $DM\oplus D^*M$ with support in $W$. Since $\langle \Gamma f,g\rangle_S=0$ for $f\in\K(W)$, $g\in\K(W')$, we have $\{B(f),B(g)\}=0$ for $f\in\K(W)$, $g\in\K(W')$, which implies the twisted locality condition \eqref{twisted-locality}. Isotony is clear from the definition, and covariance under the isometry group follows from $u(h)\K(W)=\K(hW)$.
\\\\
The model of the Dirac field therefore fits into the framework of Section \ref{sec:generaldeformations}, and the warped convolution deformation defines a one-parameter family of deformed nets $\fF_\la$. Besides the properties which were established in the general setting in Section \ref{sec:generaldeformations}, we can here consider the explicit deformed field $B(f)_{\xi,\la}$. A nice characterization of these operators can be given in terms of their $n$-point functions associated with a quasifree state $\om$ on $\fF$.

Let $\om$ be an $\alpha$-invariant quasifree state on $\fF$, and let ($\Hil^\om,\pi^\om,\Om^\om)$ denote the associated GNS triple. As a consequence of invariance of $\om$, the GNS space $\Hil^\om$ carries a unitary representation $U^\om$ of $\iso$ which leaves $\Om^\om$ invariant. In this situation, the warping map
\begin{align}\label{warp-om}
F_{\xi,\la} \mapsto F_{\xi,\la}^\om
:=
\frac{1}{4\pi^2}\lim_{\eps\to0}
\int ds\,ds'\,
e^{-iss'}\,
\chi(\eps s,\eps s')\,
U^\om_\xi(\la Q s)\pi^\om(F)U_\xi^\om(s'-\la Qs)
\,,
\end{align}
defined for $\xi$-smooth $F\in\fF$ as before, extends continuously to a representation of the Rieffel-deformed $C^*$-algebra $(\fF,\times_{\xi,\la})$ on $\Hil^\om$ \cite[Thm. 2.8]{BuchholzLechnerSummers:2010}. Moreover, the $U^\om$-invariance of $\Om^\om$ implies
\begin{align}\label{FxiOm}
F^\om_{\xi,\la}\Om^\om
=
\pi^\om(F)\Om^\om
\,,\qquad \xi\in\Xi,\la\in\Rl,\,F\in\fF\;\;\xi\text{-smooth}.
\end{align}
Since the CAR algebra is simple, all its representations are faithful \cite{BratteliRobinson:1997}. We will therefore identify $\fF$ with its representation $\pi^\om(\fF)$ in the following, and drop the $\om$-dependence of $\Hil^\om,\Om^\om,U^\om$ and the warped convolutions $F^\om_{\xi,\la}$ from our notation.

To characterize the deformed field operators $B(f)_{\xi,\la}$, we will consider the $n$-point functions
\begin{align*}
\om_n(f_1,...\,,f_n)
:=
\om(B(f_1)\cdots B(f_n))
=
\langle\Om,\,B(f_1)\cdots B(f_n)\Om\rangle
\,,\qquad
f_1,...\,,f_n\in\K\,,
\end{align*}
and the corresponding deformed expectation values of the deformed fields, called {\em deformed $n$-point functions},
\begin{align*}
\om^{\xi,\la}_n(f_1,...\,,f_n)
:=
\langle\Om,\,B(f_1)_{\xi,\la}\cdots B(f_n)_{\xi,\la}\Om\rangle
\,,\qquad
f_1,...\,,f_n\in\K\,.
\end{align*}
Of particular interest are the {\em quasifree} states, where $\om_n$ vanishes if $n$ is odd, and  $\om_n$ is a linear combination of products of two-point functions $\om_2$ if $n$ is even. In particular, the undeformed four-point function of a quasifree state reads
\begin{align}\label{4pt-null}
\!\!\!\!\!\!
\om_4(f_1,f_2,f_3,f_4)
=
\om_2(f_1,f_2)\om_2(f_3,f_4)
+\om_2(f_1,f_4)\om_2(f_2,f_3)
-\om_2(f_1,f_3)\om_2(f_2,f_4)
\,.
\!\!\!\!
\end{align}

\begin{proposition}\label{prop:n-pt}
 The deformed $n$-point functions of a quasifree and $\iso$-invariant state vanish for odd $n$. The lowest deformed even $n$-point functions are, $f_1,...,f_4\in\K$,
\begin{align}
 \om_2^{\xi,\la}(f_1,f_2) &= \om_2(f_1,f_2)
\,,\label{2pt}
\\
\omega_4^{\xi,\lambda}(f_1,f_2,f_3,f_4)
&=
\om_2(f_1,f_2)\om_2(f_3,f_4)
+\om_2(f_1,f_4)\om_2(f_2,f_3)
\label{4pt}
\\
&
- \frac{1}{4\pi^2}
\lim_{\varepsilon\to 0}
\int ds\,ds'\,
e^{-iss'}
\chi(\eps s,\eps s')
\,
\om_2(f_1,u_\xi(s)f_3)\cdot \om_2(f_2,u_\xi(2\la Q s')f_4)
\,.
\nonumber
\end{align}
\end{proposition}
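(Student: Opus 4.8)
The plan is to reduce the deformed $n$-point functions to undeformed ones and then invoke quasifreeness. The two key facts are the intertwining relation \eqref{FxiOm}, $B(f)_{\xi,\la}\Om=B(f)\Om$, and the $U_\xi$-invariance of $\Om$. Since $(B(f)_{\xi,\la})^*=B(\Gamma f)_{\xi,\la}$ by Lemma \ref{thm:deformationproperties} a), applying \eqref{FxiOm} at the two ends strips the deformation off the outermost factors, so that $\om_n^{\xi,\la}(f_1,\dots,f_n)=\langle\Om,B(f_1)\,B(f_2)_{\xi,\la}\cdots B(f_{n-1})_{\xi,\la}\,B(f_n)\Om\rangle$. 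Substituting the oscillatory-integral formula \eqref{def:Ala} for the $n-2$ remaining warped convolutions and commuting each $U_\xi(s)$ rightwards through the fields (via $U_\xi(s)B(h)U_\xi(s)^{-1}=B(u_\xi(s)h)$) until it reaches $\Om$ and disappears, one obtains the master formula
\begin{align*}
\om_n^{\xi,\la}(f_1,\dots,f_n)
&=\frac{1}{(4\pi^2)^{n-2}}\lim_{\eps\to0}\int\prod_{k=2}^{n-1}\big(ds_k\,ds_k'\,e^{-is_ks_k'}\chi(\eps s_k,\eps s_k')\big)\\
&\qquad\times\,\om_n\big(f_1,u_\xi(\sigma_2)f_2,\dots,u_\xi(\sigma_{n-1})f_{n-1},u_\xi(\sigma_n)f_n\big),
\end{align*}
with $\sigma_k=\sum_{2\le j<k}s_j'+\la Qs_k$ for $2\le k\le n-1$ and $\sigma_n=\sum_{2\le j\le n-1}s_j'$. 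Since $\om$ is quasifree, $\om_n$ vanishes for odd $n$, and hence so does $\om_n^{\xi,\la}$ (equivalently: $\gamma(B(f)_{\xi,\la})=-B(f)_{\xi,\la}$ by Lemma \ref{thm:deformationproperties} d), as $V$ commutes with $U$, together with the $\gamma$-invariance of $\om$). For $n=2$ no integrations remain and $\om_2^{\xi,\la}=\om_2$, which is \eqref{2pt}; this also follows directly from \eqref{FxiOm} and Lemma \ref{thm:deformationproperties} a).

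For $n=4$ the master formula is a double oscillatory integral over $(s_2,s_2',s_3,s_3')$ with $\sigma_2=\la Qs_2$, $\sigma_3=s_2'+\la Qs_3$, $\sigma_4=s_2'+s_3'$. I would insert the quasifree expansion \eqref{4pt-null} into $\om_4(f_1,u_\xi(\sigma_2)f_2,u_\xi(\sigma_3)f_3,u_\xi(\sigma_4)f_4)$ and simplify each two-point factor with the $\iso$-invariance of $\om$, which gives $\om_2(u_\xi(a)g,u_\xi(b)h)=\om_2(g,u_\xi(b-a)h)$. For the two non-crossing contractions $\{(1,2),(3,4)\}$ and $\{(1,4),(2,3)\}$, the differences of shifts entering the two factors involve disjoint integration variables; carrying out the integrations one by one --- each affine substitution producing, because the antisymmetry of $Q$ kills phases of the type $e^{\pm i\la\,s\cdot Qs}$, a kernel $e^{-is\cdot t}$ whose $s$-integration is a two-dimensional $\delta$-distribution --- makes all four integrations collapse, reduces the prefactor to $(4\pi^2)^{-2}(2\pi)^4=1$, and reproduces exactly $\om_2(f_1,f_2)\om_2(f_3,f_4)$ and $\om_2(f_1,f_4)\om_2(f_2,f_3)$.

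For the crossing contraction $\{(1,3),(2,4)\}$, i.e. $-\om_2(f_1,u_\xi(\sigma_3-\sigma_1)f_3)\om_2(f_2,u_\xi(\sigma_4-\sigma_2)f_4)$, the two factors still share the variable $s_2'$, so the integral does not collapse fully. Substituting so that the arguments of the two factors become independent variables and then performing the $s_2$-integration yields a $\delta$-distribution relating those two arguments --- and, because $s_2'$ enters with $\la Q$-shifts in \emph{both} of them, the relation carries a factor $2\la Q$ (the antisymmetry of $Q$ again being what makes the substitutions and phases work out) --- so the pinned integration leaves exactly one undamped oscillatory double integral, which one identifies with the last term of \eqref{4pt}, the prefactor now being $(4\pi^2)^{-2}(2\pi)^2=(4\pi^2)^{-1}$. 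Adding the three contributions gives \eqref{4pt}.

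The obstacle is technical rather than conceptual: the manipulations above with the oscillatory integrals --- exchanging the $\eps\to0$ limit with the successive integrations and with the affine changes of variables --- must be justified, and the $Q$-twists and Jacobians followed carefully through the substitutions so that the residual twist in the crossing term is genuinely $u_\xi(2\la Qs')$ and not some other linear image of $s'$. All of this is handled by the same oscillatory-integral techniques used in \cite{BuchholzLechnerSummers:2010}, exploiting that $\chi\to1$ and that in the limit only the contributions localized on the relevant lower-dimensional sets survive.
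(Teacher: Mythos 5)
Your proposal is correct and follows essentially the same route as the paper's proof: strip the deformation from the outermost fields via $B(f)_{\xi,\la}\Om=B(f)\Om$, rewrite the remaining warped convolutions as oscillatory integrals of undeformed $n$-point functions with $u_\xi$-shifted arguments (whence the odd case and \eqref{2pt}), and evaluate the $n=4$ case contraction by contraction using quasifreeness, $\iso$-invariance, affine substitutions and the antisymmetry of $Q$, with only the crossing term retaining a $2\la Q$-twisted double integral. The sole difference is organizational --- the paper first carries out the $s_2,s_2'$ integrations to reduce to a single double integral before inserting the quasifree expansion, while you expand first and collapse each term --- which does not change the substance of the argument.
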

\begin{proof}
The covariant transformation behaviour of the Dirac field, $U_\xi(s)B(f)U_\xi(s)^{-1}=B(u_\xi(s)f)$, the invariance of $\Om$ and the form \eqref{warp-om} of the warped convolution imply that any deformed $n$-point function can be written as an integral over undeformed $n$-point functions with transformed arguments. As the latter functions vanish for odd $n$, we also have $\om^{\xi,\la}_n=0$ for odd $n$.

Taking into account \eqref{FxiOm}, we obtain for the deformed two-point function
\begin{align*}
\om^{\xi,\la}_2(f_1,f_2)
&=
\langle\Om,\,B(f_1)_{\xi,\la}B(f_2)_{\xi,\la}\Om\rangle
\\
&=
\langle (B(f_1)^*)_{\xi,\la}\Om,\,B(f_2)_{\xi,\la}\Om\rangle
=
\langle B(f_1)^*\Om,\,B(f_2)\Om\rangle
=
\om_2(f_1,f_2)\,,
\end{align*}
proving \eqref{2pt}.

To compute the four-point function, we use again $B(f)_{\xi,\la}\Om=B(f)\Om$ and $U_\xi(s)\Om=\Om$. Inserting the definition of the warped convolution \eqref{def:Ala}, and using the transformation law $U_\xi(s)B(f)U_\xi(s)^{-1}=B(u_\xi(s)f)$ and the shorthand $f(s):=u_\xi(s)f$, we obtain
\begin{align*}
&\om^{\xi,\la}_4(f_1,f_2,f_3,f_4)
=
\langle\Om,\,B(f_1)B(f_2)_{\xi,\la}B(f_3)_{\xi,\la}B(f_4)\Om\rangle
\\
&=
(2\pi)^{-4}
\lim_{\eps_1,\eps_2\to0}
\int d\bs\,
e^{-i(s_1s_1'+s_2s_2')}
\chi_\eps(\bs)
\,
\om_4(f_1,f_2(\la Qs_1),f_3(\la Q s_2+s_1'),f_4(s_1'+s_2'))\,,
\end{align*}
where $d\bs=ds_1\,ds_1'\,ds_2\,ds_2'$ and $\chi_\eps(\bs)=\chi(\eps_1 s_1,\eps_1 s_1')
\chi(\eps_2 s_2,\eps_2 s_2')$. After the substitutions $s_2'\to s_2'-s_1'$ and $s_1'\to s_1'-\la Q s_2$, the integrations over $s_2,s_2'$ and the limit $\eps_2\to0$ can be carried out. The result is
\begin{align*}
(2\pi)^{-2}
\lim_{\eps_1\to0}
\int ds_1\,ds'_1\,
e^{-is_1s_1'}
\hat{\chi}(\eps_1 s_1,\eps_1 s_1')
\,
\om_4(f_1,f_2(\la Qs_1),f_3(s_1'),f_4(s_1'-\la Q s_1))\,,
\end{align*}
with a smooth, compactly supported cutoff function $\hat{\chi}$ with $\hat{\chi}(0,0)=1$.

We now use the fact that $\om$ is quasi-free and write $\om_4$ as a sum of products of two-point functions \eqref{4pt-null}. Considering the term where $f_1,f_2$  and $f_3,f_4$ are contracted, in the second factor $\om_2(f_3(s'_1),f_4(s_1'-\la Qs_1))$ the $s_1'$-dependence drops out because $\om$ is invariant under isometries. So the integral over $s_1'$ can be performed, and yields a factor $\delta(s_1)$ in the limit $\eps_1\to0$. Hence all $\la$-dependence drops out in this term, as claimed in \eqref{4pt}.

Similarly, in the term where $f_1,f_4$  and $f_2,f_3$ are contracted, all integrations disappear after using the invariance of $\om$ and making the substitution $s_1'\to s_1'+\la Qs_1$. Also this term does not depend on $\la$.

Finally, we compute the term containing the contractions $f_1,f_3$ and $f_2,f_4$,
\begin{align*}
&(2\pi)^{-2}
\lim_{\eps_1\to0}
\int ds_1\,ds'_1\,
e^{-is_1s_1'}
\hat{\chi}(\eps_1 s_1,\eps_1 s_1')
\,
\om_2(f_1,f_3(s_1'))\cdot \om_2(f_2(\la Qs_1),f_4(s_1'-\la Q s_1))
\\
&=
(2\pi)^{-2}
\lim_{\eps_1\to0}
\int ds_1\,ds'_1\,
e^{-is_1s_1'}
\hat{\chi}(\eps_1 s_1,\eps_1 s_1')
\,
\om_2(f_1,f_3(s_1'))\cdot \om_2(f_2,f_4(s_1'-2\la Q s_1))
\\
&=
(2\pi)^{-2}
\lim_{\eps_1\to0}
\int ds_1\,ds'_1\,
e^{-is_1s_1'}
\tilde{\chi}(\eps_1 s_1,\eps_1 s_1')
\,
\om_2(f_1,f_3(s_1'))\cdot \om_2(f_2,f_4(2\la Q s_1))
\,.
\end{align*}
In the last step, we substituted $s_1\to s_1+\frac{1}{2\la}Q^{-1}s_1'$, used the antisymmetry of $Q$ and absorbed the new variables in $\hat{\chi}$ in a redefinition of this function. Since the oscillatory integrals are independent of the particular choice of cutoff function, comparison with \eqref{4pt} shows that the proof is finished.
\end{proof}

The structure of the deformed $n$-point functions build from a quasifree state $\om$ is quite different from the undeformed case. In particular, the two-point function is undeformed, but the four-point function depends on the deformation parameter. For even $n>4$, a structure similar to the $n$-point functions on noncommutative Minkowski space \cite{GrosseLechner:2008} is expected, which are all $\lambda$-dependent. These features clearly show that the deformed field $B(f)_{\xi,\la}$, $\la\neq0$, differs from the undeformed field. Considering the commutation relations of the deformed field operators, it is also straightforward to check that the deformed field is not unitarily equivalent to the undeformed one.

However, this structure does not yet imply that the deformed and undeformed Dirac quantum field theories are inequivalent. For there could exist a unitary $V$ on $\Hil$ satisfying $VU(h)V^*=U(h)$, $h\in\iso$, $V\Om=\Om$, which does not interpolate the deformed and undeformed fields, but the $C^*$-algebras according to $V\fF(W)_\la V^*=\fF(W)$, $W\in\W$. If such a unitary exists, the two theories would be physically indistinguishable.

On flat spacetime, an indirect way of ruling out the existence of such an intertwiner $V$, and thus establishing the non-equivalence of deformed and undeformed theories, is to compute their S-matrices and show that these depend on $\la$ in a non-trivial manner. However, on curved spacetimes, collision theory is not available and we will therefore follow the more direct non-equivalence proof of \cite[Lemma 4.6]{BuchholzLechnerSummers:2010}, adapted to our setting. This proof aims at showing that the local observable content of warped theories is restricted in comparison to the undeformed setting, as one would expect because of the connection ot noncommutative spacetime. However, the argument requires a certain amount of symmetry, and we therefore restrict here to the case of a Friedmann-Robertson-Walker spacetime $M$.

As discussed in Section \ref{sec:examples}, $M$ can then be viewed as $J\times\Rl^3\subset\Rl^4$ via a conformal embedding, where $J\subset\Rl$ is an interval. Recall that in this case, we have the Euclidean group E$(3)$ contained in $\iso(M,g)$, and can work in global coordinates $(\tau,x,y,z)$, where $\tau\in J$ and $x,y,z\in\Rl$ are the flow parameters of Killing fields. As reference Killing pair, we pick $\zeta:=(\partial_y,\partial_z)$, and as reference wedge, the ``right wedge'' $W^0:=W_{\zeta,0}=\{(\tau,x,y,z)\,:\,\tau\in J,\,x>|\tau|\}$.

In this geometric context, consider the rotation $r^\varphi$ about angle $\varphi$ in the $x$-$y$-plane, and the cone
\begin{align}\label{def:cone}
 \C := r^\varphi W^0 \cap r^{-\varphi} W^0\,,
\end{align}
with some fixed angle $|\varphi|<\frac{\pi}{2}$. Clearly $\C\subset W^0$, and the reflected cone $j_x\C$, where $j_x(t,x,y,z)=(t,-x,y,z)$, lies spacelike to $W^0$ and $r^\varphi W^0$.

Moreover, we will work in the GNS-representation of a particular state $\om$ on $\fF$ for the subsequent proposition, which besides the properties mentioned above also has the Reeh-Schlieder property. That is, the von Neumann algebra $\fF(\C)''\subset\B(\Hil^\om)$ has $\Om^\om$ as a cyclic vector.

Since the Dirac field theory is a locally covariant quantum field theory satisfying the time slice axiom \cite{Sanders:2009-2}, the existence of such states can be deduced by spacetime deformation arguments \cite[Thm. 4.1]{Sanders:2009-1}. As $M$ and the Minkowski space have unique spin structures, and $M$ can be deformed to Minkowski spacetime in such a way that its E$(3)$ symmetry is preserved, the state obtained from deforming the Poincar\'e invariant vacuum state on $\Rl^4$ is still invariant under the action of the Euclidean group.

In the GNS representation of such a Reeh-Schlieder state on a Friedmann-Robertson-Walker spacetime, we find the following non-equivalence result.

\begin{proposition}\label{prop:inequivalence}
Consider the net $\fF_\la$ generated by the deformed Dirac field on a Fried\-mann-Robertson-Walker spacetime with flat spatial sections in the GNS representation of an invariant state with the Reeh-Schlieder property. Then the implementing vector $\Om$ is cyclic for the field algebra ${\fF(\C)_\la}''$ associated with the cone \eqref{def:cone} if and only if $\la=0$. In particular, the nets $\fF_0$ and $\fF_\la$ are inequivalent for $\la\neq0$.
\end{proposition}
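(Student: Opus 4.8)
The plan is to follow the pattern of \cite[Lemma~4.6]{BuchholzLechnerSummers:2010}, transported to the curved setting. First, the concluding ``in particular'' clause is a formal consequence of the cyclicity dichotomy: were there a unitary $V$ on $\Hil$ with $VU(h)V^*=U(h)$ for $h\in\iso$, $V\Om=\Om$, and $V\fF(W)_\la V^*=\fF(W)$ for all $W\in\W$, then, since conjugation by a unitary commutes with intersections, $V\fF(\C)_\la V^*=V\big(\bigcap_{W\supset\C}\fF(W)_\la\big)V^*=\bigcap_{W\supset\C}\fF(W)=\fF(\C)_0$, hence $V\,\fF(\C)_\la''\,V^*=\fF(\C)_0''$; as $\fF(\C)_0''\supset\fF(\C)''$ has $\Om$ as a cyclic vector and $V\Om=\Om$, this would make $\Om$ cyclic for $\fF(\C)_\la''$, which by the dichotomy forces $\la=0$. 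So it suffices to prove the dichotomy, and the direction ``$\la=0\Rightarrow$ cyclic'' is immediate, since $\fF(\C)_0''\supset\fF(\C)''$ and the latter has $\Om$ cyclic by hypothesis.

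For the converse I would argue as follows. The geometric facts to exploit are that $\C\subset W^0\cap r^\varphi W^0\cap r^{-\varphi}W^0$ and that the reflected cone $j_x\C$ lies spacelike to each of $W^0$ and $r^{\pm\varphi}W^0$; indeed $j_x$ interchanges $W^0$ and $(W^0)'$ and conjugates $r^\varphi$ to $r^{-\varphi}$, so $j_x\C=(r^\varphi W^0)'\cap(r^{-\varphi}W^0)'$, and $j_x\C\subset(W^0)'$ because $\C\subset W^0$. By isotony of the deformed net (Theorem~\ref{thm:deformed-net}), $\fF(\C)_\la\subset\fF(W^0)_\la\cap\fF(r^\varphi W^0)_\la\cap\fF(r^{-\varphi}W^0)_\la$. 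For every $f\in\K(j_x\C)$ the operator $B(f)$ is smooth for each Killing pair, and, by \eqref{def:awla-2} and the spacelike relations just noted, the deformed fields $B(f)_{\zeta',\la}$, $B(f)_{(r^\varphi_*\zeta)',\la}$ and $B(f)_{(r^{-\varphi}_*\zeta)',\la}$ lie respectively in $\fF((W^0)')_\la$, $\fF((r^\varphi W^0)')_\la$, $\fF((r^{-\varphi}W^0)')_\la$. Twisted locality \eqref{locality2} then gives $[ZAZ^*,\,B(f)_{(r^\theta_*\zeta)',\la}]=0$ for all $A\in\fF(\C)_\la$, all $f\in\K(j_x\C)$ and $\theta\in\{0,\varphi,-\varphi\}$. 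Hence $Z\fF(\C)_\la Z^*$ lies in the commutant of the von Neumann algebra $\NN_\la$ generated by this rotated family of deformed fields smeared in $j_x\C$, and therefore $\overline{\fF(\C)_\la''\,\Om}\subset\overline{\NN_\la'\,\Om}$, the prime denoting the $Z$-twisted commutant. The proof is completed by showing that $\Om$ is \emph{not} cyclic for $\NN_\la'$, equivalently that $\Om$ is not separating for $\NN_\la$.

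This last step is where the deformation parameter enters decisively, and it is the one I expect to be the main obstacle. For $\la=0$ all three operators $B(f)_{(r^\theta_*\zeta)',0}$ reduce to the single undeformed field $B(f)$, so $\NN_0=\fF(j_x\C)''$, for which $\Om$ is separating — as follows from $\C\subset(j_x\C)'$ and the assumed cyclicity of $\Om$ for $\fF(\C)''$ — and the argument correctly produces nothing, which it must, since $\Om$ is then cyclic for $\fF(\C)_0''$. For $\la\neq0$, by contrast, the three deformed fields are genuinely distinct (their differences are governed by the $\la$-dependent four-point function of Proposition~\ref{prop:n-pt}, once one uses $B(f)_{\xi,\la}\Om=B(f)\Om$, cf.\ \eqref{FxiOm}), and through the warping they effectively involve the translates $u_\xi(s)f$ of $f$ along the flows of the inequivalent Killing pairs $\zeta'$ and $(r^{\pm\varphi}_*\zeta)'$, which point in independent spacelike directions. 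The plan is to show that the algebra $\NN_\la$ thereby generated is strictly larger than $\fF(j_x\C)''$ and, in fact, large enough that $\Om$ loses the separating property — the cleanest route being to exhibit, for $\la\neq0$, a concrete nonzero vector orthogonal to $\fF(\C)_\la''\,\Om$ (equivalently a nonzero element of $\NN_\la$ annihilating $\Om$) directly from the explicit deformed correlators. Making this quantitative is the crux: it requires a Wick-type expansion of arbitrary products of deformed fields from the rotated family, in the spirit of the Minkowski-space computation in \cite{BuchholzLechnerSummers:2010}, rather than only the two- and four-point functions of Proposition~\ref{prop:n-pt}. Granting it, $\Om$ fails to be cyclic for $\fF(\C)_\la''$ whenever $\la\neq0$, which establishes the dichotomy and hence the inequivalence of the nets $\fF_0$ and $\fF_\la$.
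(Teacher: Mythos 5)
Your reduction is structurally close to the paper's own argument: the paper also assumes cyclicity of $\Om$ for ${\fF(\C)_\la}''$, passes via $U(j_x)$ to ${\fF(j_x\C)_\la}''$, and exploits the resulting separating property of $\Om$ — there applied directly to the (twisted) commutant of $\fF(j_x\C)_\la$, which contains $\fF(W^0)_\la$ and $\fF(r^\varphi W^0)_\la$ and hence the two operators $B(f)_{\zeta,\la}$ and $B(f)_{r^\varphi_*\zeta,\la}$ for $f\in\K(\C)$; since these agree on $\Om$ by \eqref{FxiOm}, separation would force them to be equal as operators. Your dual routing through the auxiliary algebra $\NN_\la$ of deformed fields smeared in $j_x\C$ is workable (modulo some bookkeeping with the twist $Z$ that you gloss over), and note that within your own setup the sought nonzero annihilator of $\Om$ is simply the difference of two generators of $\NN_\la$ built from the same $f$ but different Killing pairs, since by \eqref{FxiOm} they coincide on $\Om$; no ``Wick-type expansion of arbitrary products'' is required.

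The genuine gap is that the decisive step — showing that for $\la\neq 0$ the deformed fields associated with $\zeta$ and $r^\varphi_*\zeta$ (equivalently their inverted pairs) really are different operators — is nowhere carried out: you defer it with ``Granting it''. This is where all the substance of the proposition lies, and the paper settles it using only the deformed four-point function of Proposition \ref{prop:n-pt}: for $f\in\K(\C)$ and suitable smooth $f_1,f_4$ one computes
\begin{align*}
\om^{\zeta,\la}_4(f_1,f,f,f_4)-\om^{r^\varphi_*\zeta,\la}_4(f_1,f,f,f_4)
=(w^0_{13}\star_\la w^0_{24})(0)-(w^\varphi_{13}\star_\la w^\varphi_{24})(0)\,,
\end{align*}
whose first-order term in $\la$ is a difference of Poisson brackets built from the generators $P^\zeta_j$, $P^{r^\varphi_*\zeta}_j$; if it vanished for all admissible $f,f_1,f_4$, one would get $P^\zeta_j=P^{r^\varphi_*\zeta}_j$, i.e.\ spacelike translations acting trivially on the Dirac field, contradicting its locality and covariance. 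Without this (or an equivalent) quantitative input your argument establishes only the easy direction and the formal ``in particular'' clause, not the failure of cyclicity for $\la\neq0$.
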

\begin{proof}
Let $f\in\K(\C)$, so that $f, u(r^{-\varphi})f\in\K(W^0)$, and both field operators, $B(f)_{\zeta,\la}$ and $B(u(r^{-\varphi})f)_{\zeta,\la}$, are contained in $\fF(W^0)_\la$. Taking into account the covariance of the deformed net, it follows that $U(r^\varphi)B(u(r^{-\varphi})f)_{\zeta,\la} U(r^{-\varphi}) = B(f)_{r^\varphi_*\zeta,\la}$ lies in $\fF(r^\varphi W^0)_\la$.

Now the cone $\C$ is defined in such a way that the two wedges $W^0$ and $r^\varphi W^0$ lie spacelike to $j_x\C$. Let us assume that $\Om$ is cyclic for ${\fF(\C)_\la}''$, which by the unitarity of $U(j_x)$ is equivalent to $\Om$ being cyclic for ${\fF(j_x\C)_\la}''$. Hence $\Om$ is separating for the commutant ${\fF(j_x\C)_\la}'$, which by locality contains $\fF(W^0)_\la$ and $\fF(r^\varphi W^0)_\la$. But in view of \eqref{FxiOm}, $B(f)_{\zeta,\la}$ and $B(f)_{r^\varphi_*\zeta,\la}$ coincide on $\Om$,
\begin{align*}
B(f)_{r^\varphi_*\zeta,\la}\Om
=
B(f)\Om
=
B(f)_{\zeta,\la}\Om
\,,
\end{align*}
so that the separation property implies $B(f)_{\zeta,\la}=B(f)_{r^\varphi_*\zeta,\la}$.

To produce a contradiction, we now show that these two operators are actually not equal. To this end, we consider a difference of four-point functions \eqref{4pt}, with smooth vectors $f_1,f_2:=f,f_3:=f,f_4$, and Killing pairs $\zeta$ respectively $r^\varphi_*\zeta$. With the abbreviations $w^\varphi_{ij}(s):=\om_2(f_i,u_{r^\varphi_*\zeta}(s)f_j)$, we obtain
\begin{align*}
&\hspace*{-4cm}
\langle\Om,B(f_1)\left(B(f)_{\zeta,\la}B(f)_{\zeta,\la}
-
B(f)_{r^\varphi_*\zeta,\la}B(f)_{r^\varphi_*\zeta,\la}\right)B(f_4)\Om\rangle
\\
&=
\om^{\zeta,\la}_4(f_1,f,f,f_4)
-
\om^{r^\varphi_*\zeta,\la}_4(f_1,f,f,f_4)
\\
&=
(w^0_{13}\star_\la w^0_{24})(0)
-
(w^\varphi_{13}\star_\la w^\varphi_{24})(0)\,,
\end{align*}
where $\star_\la$ denotes the Weyl-Moyal star product on smooth bounded functions on $\Rl^2$, with the standard Poisson bracket given by the matrix \eqref{def:Q} in the basis $\{\zeta_1,\zeta_2\}$. Now the asymptotic expansion of this expression for $\la\to0$ gives in first order the difference of Poisson brackets \cite{EstradaGracia-BondiaVarilly:1989}
\begin{align*}
\{w^0_{13},\,w^0_{24}\}(0)-\{w^\varphi_{13},\,w^\varphi_{24}\}(0)
&=
\langle f_1, P^\zeta_1 f\rangle\langle f, P^\zeta_2 f_4\rangle
-
\langle f_1, P^\zeta_2 f\rangle\langle f, P^\zeta_1 f_4\rangle
\\
&\qquad -
\langle f_1, P^{r^\varphi_*\zeta}_1 f\rangle\langle f, P^{r^\varphi_*\zeta}_2 f_4\rangle
+
\langle f_1, P^{r^\varphi_*\zeta}_2 f\rangle\langle f, P^{r^\varphi_*\zeta}_1 f_4\rangle
\,,
\end{align*}
where all scalar products are in $\K$ and $P^{r^\varphi_*\zeta}_1,P^{r^\varphi_*\zeta}_1$ denote the generators of $s\mapsto u_{r^\varphi_*\zeta}(s)$. By considering $f_4$ orthogonal to $P_1^\zeta f$ and $P_1^{r^\varphi_*\zeta}f$, we see that for $B(f)_{\zeta,\la}=B(f)_{r^\varphi_*\zeta,\la}$ it is necessary that $\langle f_1,(P^\zeta_j-P^{r^\varphi_*\zeta}_j)f\rangle=0$. But varying $f,f_1$ within the specified limitations gives dense subspaces in $\K$, {\em i.e.} we must have $P^\zeta_j=P^{r^\varphi_*\zeta}_j$. This implies that translations in a spacelike direction are represented trivially on the Dirac field, which is not compatible with its locality and covariance properties.

So we conclude that the deformed field operator $B(f)_{r^\varphi_*\zeta,\la}$ is not independent of $\varphi$ for $\la\neq0$, and hence the cyclicity assumption is not valid for $\la\neq0$. Since on the other hand $\Om$ is cyclic for ${\fF(\C)_0}''$ by the Reeh-Schlieder property of $\om$, and a unitary $V$ leaving $\Om$ invariant and mapping $\fF(\C)_0$ onto $\fF(\C)_\la$ would preserve this property, we have established that the nets $\fF_0$ and $\fF_\la$, $\la\neq0$, are not equivalent.
\end{proof}

\section{Conclusions}\label{sec:outlook}

In this paper we have shown how to apply the warped convolution deformation to quantum field theories on a large class of globally hyperbolic spacetimes. Under the requirement that the group of isometries contains a two-dimensional Abelian subgroup generated by two commuting spacelike Killing fields, many results known for Minkowski space theories were shown to carry over to the curved setting by formulating concepts like edges, wedges and translations in a geometrical language. In particular, it has been demonstrated in a model-independent framework that the basic covariance and wedge-localization properties we started from are preserved for all values of the deformation parameter $\la$. As a concrete example we considered a warped Dirac field on a Friedmann-Robertson-Walker spacetime in the GNS representation of a quasifree and $\iso$-invariant state with Reeh-Schlieder property. It was shown that the deformed models depend in a non-trivial way on $\la$, and violate the Reeh-Schlieder property for regions smaller than wedges. In view of the picture that the deformed models can be regarded as effective theories on a noncommutative spacetime, where strictly local quantities do not exist because of space-time uncertainty relations, it is actually expected that they do not contain operators sharply localized in bounded spacetime regions for $\la\neq0$.

At the current stage, it is difficult to give a clear-cut physical interpretation to the models constructed here since scattering theory is not available for quantum field theories on generic curved spacetimes. Nonetheless, it is interesting to note that in a field theoretic context the deformation often leaves invariant the two-point function (Prop. \ref{prop:n-pt}), the quantity which is most frequently used for deriving observable quantum effects in cosmology (for the example of quantised cosmological perturbations, see \cite{MukhanovFeldmanBrandenberger:1990}). So when searching for concrete scenarios where deformed quantum field theories can be matched to measurable effects, one has to look for phenomena involving the higher $n$-point functions.
\\
\\
There exist a number of interesting directions in which this research could be extended. As far as our geometrical construction of wedge regions in curved spacetimes is concerned, we limited ourselves here to edges generated by commuting Killing fields. This assumption rules out many physically interesting spacetimes, such as de Sitter, Kerr or Friedmann-Robertson-Walker spacetimes with compact spatial sections. An extension of the geometric construction of edges and wedges to such spaces seems to be straightforward and is expected to coincide with the notions which are already available. For example, in the case of four-dimensional de Sitter spacetime, edges have the topology of a two-sphere. Viewing the two-sphere as an $\SO(3)$ orbit, a generalization of the warped convolution deformation formula could involve an integration over this group instead of $\Rl^2$. A deformation scheme of $C^*$-algebras which is based on an action of this group is currently not yet available (see however \cite{Bieliavsky:2007} for certain non-Abelian group action). In the de Sitter case, a different possibility is to base the deformation formula on the flow of a spacelike and commuting timelike Killing field leaving a wedge invariant. The covariance properties of the nets which arise from such a procedure are somewhat different and require an adaption of the kernel in the warped convolution \cite{MorfaMorales:2010}.

\newpage
It is desireable to find more examples of adequate deformation formulas compatible with the basic covariance and (wedge-) localization properties of quantum field theory. A systematic exploration of the space of all deformations is expected to yield many new models and a better understanding of the structure of interacting quantum field theories.

\vspace*{9mm}
{\noindent\bf Acknowledgements}\\
C.D. gratefully acknowledges financial support from the Junior Fellowship Programme of the Erwin Schr\"odinger Institute and from the German Research Foundation DFG through the Emmy Noether Fellowship WO 1447/1-1. 
E.M.M. would like to thank the EU-network MRTN-CT-2006-031962, EU-NCG for financial support.
We are grateful to Valter Moretti and Helmut Rumpf for helpful discussions.

{\footnotesize

\newcommand{\etalchar}[1]{$^{#1}$}

}

\end{document}